\newfont{\bbb}{msbm10 scaled 500}
\newfont{\bb}{msbm10 scaled 1100}
\newcommand{\cv}{{\bf c}}
\newcommand{\ov}{{\bf o}}
\newcommand{\qv}{{\bf q}}
\newcommand{\sv}{{\bf s}}
\newcommand{\xv}{{\bf x}}
\newcommand{\yv}{{\bf y}}
\newcommand{\Cm}{{\bf C}}
\newcommand{\Om}{{\bf O}}
\newcommand{\Qm}{{\bf Q}}
\newcommand{\Sm}{{\bf S}}
\newcommand{\Vm}{{\bf V}}
\newcommand{\Xm}{{\bf X}}
\newcommand{\Ym}{{\bf Y}}
\newcommand{\Ac}{{\cal A}}
\newcommand{\Cc}{{\cal C}}
\newcommand{\Nc}{{\cal N}}
\newcommand{\Oc}{{\cal O}}
\newcommand{\Pc}{{\cal P}}
\newcommand{\Qc}{{\cal Q}}
\newcommand{\Rc}{{\cal R}}
\newcommand{\Sc}{{\cal S}}
\newcommand{\Wc}{{\cal W}}
\newcommand{\Xc}{{\cal X}}
\newcommand{\Yc}{{\cal Y}}
\newtheorem{theorem}{Theorem}
\newtheorem{corollary}[theorem]{Corollary}%[chapter]
\newtheorem{lemma}[theorem]{Lemma}%[chapter]
\author{
 \IEEEauthorblockN{O. Ozan Koyluoglu %,~\IEEEmembership{Student Member,~IEEE,}
 and Hesham El Gamal%,~\IEEEmembership{Fellow,~IEEE}
 \thanks{O. Ozan Koyluoglu and Hesham El Gamal are with the 
 Department of Electrical and Computer
 Engineering, The Ohio State University, Columbus, OH 43210, USA.
 (e-mail: \{koyluogo,helgamal\}@ece.osu.edu).}
%  \thanks{
% Manuscript received May 24, 2009; revised September 9, 2010.
% The material in this paper was presented in part
% at the IEEE International Symposium on Personal, Indoor and Mobile Radio
% Communications (PIMRC), Cannes, France, September 2008.}
 \thanks{This work is submitted to IEEE Transactions on Information Theory
 in May 2009, and revised in September 2010.}
 \thanks{This research was supported in
 part by the National Science Foundation under Grant CCF-07-28762,
 and in part by Los Alamos National Laboratory and Qatar National
 Research Fund.
 The first author is supported in part by the Presidential
 Fellowship Award of the Ohio State University.}
 %\thanks{Communicated by T.~Fujiwara,
 %Associate Editor for Complexity and Cryptography.}
 %\thanks{Digital Object Identifier XX.XXXX/TIT.XXXX.XXXXXX}
 }}
\title{Cooperative encoding for secrecy in interference channels}
\begin{document}
\maketitle

%%%%%%%%%%%%%%%%%%%%%%%%%%%%%%%%%%%%%%%%%%%%%%%%%%%%%%%%%%%%%%%%%%%%%%%%%%%%%%
%%%%%%%%%%%%%%%%%%%%%%%%%%%%%%%%%%%%%%%%%%%%%%%%%%%%%%%%%%%%%%%%%%%%%%%%%%%%%%

\begin{abstract}

This paper investigates the fundamental performance limits of the two-user
interference channel in the presence of an external eavesdropper. In this
setting, we construct an inner bound, to the secrecy capacity region, based
on the idea of cooperative encoding in which the two
users {\em cooperatively} design their randomized codebooks and {\em jointly}
optimize their channel prefixing distributions. Our achievability scheme
also utilizes message-splitting in order to allow for partial decoding of
the interference at the non-intended receiver.
Outer bounds are then derived and used to establish the optimality of the
proposed scheme in certain cases. In the Gaussian case, the previously
proposed cooperative jamming and noise-forwarding techniques are shown
to be special cases of our proposed approach. Overall, our results provide
structural insights on how the interference can be {\em exploited} to
increase the secrecy capacity of wireless networks.
\end{abstract}

%%%%%%%%%%%%%%%%%%%%%%%%%%%%%%%%%%%%%%%%%%%%%%%%%%%%%%%%%%%%%%%%%%%%%%%%%%%%%%
%%%%%%%%%%%%%%%%%%%%%%%%%%%%%%%%%%%%%%%%%%%%%%%%%%%%%%%%%%%%%%%%%%%%%%%%%%%%%%

\begin{IEEEkeywords}
Cooperative encoding, information theoretic security, interference channel.
\end{IEEEkeywords}

%%%%%%%%%%%%%%%%%%%%%%%%%%%%%%%%%%%%%%%%%%%%%%%%%%%%%%%%%%%%%%%%%%%%%%%%%%%%%%
%%%%%%%%%%%%%%%%%%%%%%%%%%%%%%%%%%%%%%%%%%%%%%%%%%%%%%%%%%%%%%%%%%%%%%%%%%%%%%

\section{Introduction}
\label{sec:Introduction}

Without the secrecy constraint, the interference channel has been
investigated extensively in the literature. The best known achievable region
was obtained in~\cite{Han:A81} and was recently simplified in~\cite{Chong:On08}.
However, except for some special cases
(e.g., \cite{Sato:The81,Costa:The87,Shang:A,Annapureddy:Gaussian,Motahari:Capacity}),
characterizing the capacity region of the two-user Gaussian interference
channel remains as an open problem. On the other hand, recent attempts have
shed light on the fundamental limits of
the interference channels with confidential
messages~\cite{Liang:Cognitive07,Liu:Discrete08,Koyluoglu:On08,Koyluoglu:Interference}.
Nonetheless, the external eavesdropper scenario, the model considered here, has
not been addressed adequately in the literature. In fact, to the best of
our knowledge, the only relevant work is the recent study on the achievable
secure degrees of freedom (DoF) of the $K$-user Gaussian interference
channels under a frequency selective
fading model~\cite{Koyluoglu:On08,Koyluoglu:Interference}.

This work develops a general approach for cooperative encoding
for the (discrete) two-user memoryless interference
channels operated in the presence of a passive eavesdropper. The proposed
scheme allows for cooperation in two distinct ways:

$1$) The two users jointly optimize their randomized
(two-dimensional) codebooks~\cite{Wyner:The75}, and

$2$) The two users jointly introduce randomness in the transmitted
signals, to confuse the eavesdropper, via a cooperative channel
prefixing approach~\cite{Csiszar:Broadcast78}.

Remarkably, the two methods, respectively, are helpful in adding
\emph{decodable} and \emph{undecodable} randomness to the channel.
The proposed scheme also utilizes message-splitting and
partial decoding to enlarge the achievable secrecy rate
region~\cite{Han:A81}. We then derive
outer bounds to the secrecy capacity region and use them to
establish the optimality of the proposed scheme for some classes
of channels. In addition, we argue that some coding techniques
for the secure discrete multiple access channel and relay-eavesdropper
channel can be obtained as special cases of our cooperative
encoding approach.

Recently, noise forwarding (or jamming) has been shown to
enhance the achievable secrecy rate region of several
Gaussian multi-user channels (e.g.,~\cite{Negi:Secret05}
,~\cite{Tekin:The08}). The basic idea is to allow each transmitter
to allocate only a fraction of the available power for its randomized
codebook and use the rest for the generation of independent
noise samples. The superposition of the two signals is then
transmitted. With the appropriate power allocation policy, one
can ensure that the jamming signal results in maximal ambiguity at
the eavesdropper while incuring only a minimal loss in the achievable
rate at the legitimate receiver(s). Our work reveals the fact that
this noise injection technique can be obtained as a manifestation
of the cooperative channel prefixing approach. Based on this
observation, we obtain a larger achievable region for the
secrecy rate in the Gaussian multiple access channel.

\IEEEpubidadjcol

The rest of the paper is organized as follows.
Section~\ref{sec:II} is devoted to the discrete memoryless
scenario where the main results of the paper are derived
and few special cases are analyzed. The analysis for the
Gaussian channels, along with numerical results in selected
scenarios, are given in Section~\ref{sec:III}.
Finally, we offer some concluding remarks in Section~\ref{sec:IV}.
The proofs are collected in the appendices to enhance
the flow of the paper.

%%%%%%%%%%%%%%%%%%%%%%%%%%%%%%%%%%%%%%%%%%%%%%%%%%%%%%%%%%%%%%%%%%%%%%%%%%%%%%
%%%%%%%%%%%%%%%%%%%%%%%%%%%%%%%%%%%%%%%%%%%%%%%%%%%%%%%%%%%%%%%%%%%%%%%%%%%%%%

\section{Security for the Discrete Memoryless Interference Channel}
\label{sec:II}

%%%%%%%%%%%%%%%%%%%%%%%%%%%%%%%%%%%%%%%%%%%%%%%%%%%%%%%%%%%%%%%%%%%%%%%%%%%%%%

\subsection{System Model and Notations}
Throughout this paper, vectors are denoted as $\xv^i=\{x(1),\cdots,x(i)\}$, where
we omit the subscript $i$ if $i=n$, i.e., $\xv=\{x(1),\cdots,x(n)\}$.
Random variables are denoted with capital letters $X$, which
are defined over sets denoted by the calligraphic letters $\Xc$,
and random vectors are denoted as bold-capital letters $\Xm^i$.
Again, we drop the subscript $i$ for $\Xm=\{X(1),\cdots,X(n)\}$.
We define, $[x]^+\triangleq\max\{0,x\}$,
$\bar{\alpha}\triangleq 1-\alpha$, and
$\gamma(x)\triangleq\frac{1}{2}\log_2(1+x)$.
The delta function $\delta(x)$ is defined as
$\delta(x)=1$, if $x=0$; $\delta(x)=0$, if $x\neq0$.
Also, we use the following shorthand for
probability distributions: $p(x)\triangleq p(X=x)$,
$p(x|y)\triangleq p(X=x|Y=y)$. The same notation will be used
for joint distributions. For any randomized codebook, we construct
$2^{nR}$ rows (corresponding to message indices) and
$2^{nR^x}$ codewords per row (corresponding to randomization indices),
where we refer to $R$ as the secrecy rate and $R^x$ as the randomization
rate. (Please refer to Fig.~\ref{fig:1}.)
Finally, for a given set $\Sc$,
$R_{\Sc} \triangleq \sum\limits_{i\in\Sc} R_i$ for secrecy rates
and $R_{\Sc}^x \triangleq \sum\limits_{i\in\Sc} R_i^x$ for the
randomization rates.

Our discrete memoryless two-user interference channel with
an (external) eavesdropper (IC-E) is denoted by
$$(\Xc_1 \times \Xc_2, p(y_1,y_2,y_e|x_1,x_2),
\Yc_1 \times \Yc_2 \times \Yc_e),$$ for some finite sets
$\Xc_1, \Xc_2, \Yc_1, \Yc_2, \Yc_e$ (see Fig.~\ref{fig:2}).
Here the symbols $(x_1,x_2)\in \Xc_1 \times \Xc_2$ are the
channel inputs and the symbols $(y_1,y_2,y_e)\in \Yc_1 \times \Yc_2
\times \Yc_e$ are the channel outputs observed at the
decoder $1$, decoder $2$, and at the eavesdropper, respectively.
The channel is memoryless and
time-invariant:
$$p(y_1(i),y_2(i),y_e(i)|\xv_1^i,\xv_2^i,
\yv_1^{i-1},\yv_2^{i-1},\yv_e^{i-1})
= p(y_1(i),y_2(i),y_e(i)|x_1(i),x_2(i)).$$

We assume that each transmitter $k\in\{1,2\}$ has a secret message
$W_k$ which is to be transmitted to the respective receiver in $n$
channel uses and to be secured from the eavesdropper.
In this setting,
an $(n,M_1,M_2,P_{e,1},P_{e,2})$ secret codebook
has the following components:

$1$) The secret message set $\Wc_k=\{1,...,M_k\}$; $k=1,2$.

$2$) A stochastic encoding function $f_k(.)$ at transmitter $k$
which maps the secret messages to the transmitted symbols:
$f_k:w_k\to \Xm_k$ for each $w_k\in\Wc_k$; $k=1,2$.

$3$) Decoding function $\phi_k(.)$ at receiver $k$
which maps the received symbols to an estimate of the message:
$\phi_k(\Ym_k)=\hat{w}_k$; $k=1,2$.

\begin{figure}[t] %1
    \centering
    \includegraphics[width=0.3\columnwidth]{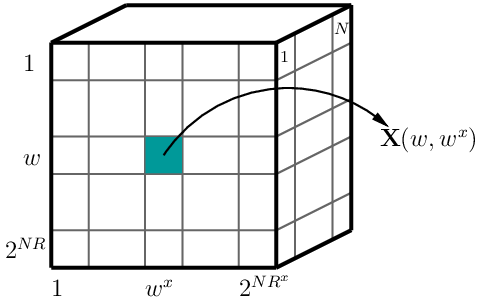}
    \caption{Randomized codebook of Wyner. For a given message index,
    a column is randomly chosen, and the corresponding channel
    input is transmitted. Designing the number of columns properly
    is the key to prove that the security constraint is met.
    We refer to this codebook as
    randomized (two-dimensional) codebook.}\label{fig:1}
\end{figure}

\begin{figure}[t] %2
    \centering
    \includegraphics[width=0.7\columnwidth]{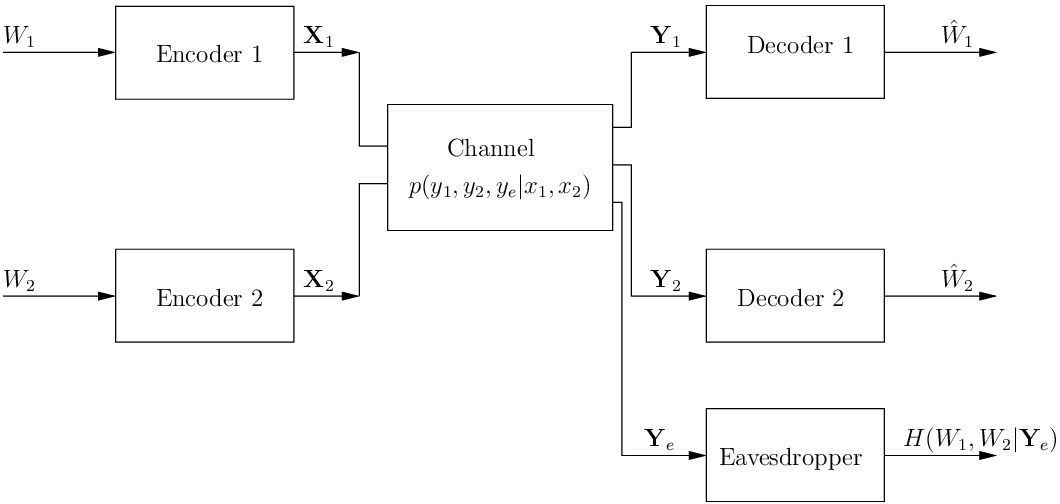}
    \caption{The discrete memoryless interference channel
    with an eavesdropper (IC-E).}\label{fig:2}
\end{figure}

The reliability of transmission is measured by the following
probabilities of error
\begin{eqnarray}
P_{e,k}=\frac{1}{M_1 M_2}\sum\limits_{(w_1,w_2)\in \Wc_1\times\Wc_2}
\textrm{Pr} \bigg\{\phi_k(\Ym_k)\neq w_k | (w_1,w_2) \textrm{ is sent}\bigg\},\nonumber
\end{eqnarray}
for $k=1,2$. The secrecy is measured by the equivocation
rate
$$\frac{1}{n}H\left(W_1,W_2|\Ym_e\right).$$

We say that the rate tuple $(R_1,R_2)$ is achievable for the IC-E
if, for any given $\epsilon > 0$, there exists an
$(n,M_1,M_2,P_{e,1},P_{e,2})$ secret codebook such that,
\begin{eqnarray}\label{eq:secrecy}
\frac{1}{n}\log(M_1) &=& R_1 \nonumber\\
\frac{1}{n}\log(M_2) &=& R_2 , \nonumber\\
\max\{P_{e,1},P_{e,2}\}&\leq& \epsilon \nonumber,
\end{eqnarray} and
\begin{eqnarray}
R_1+R_2 -
\frac{1}{n}H\left(W_1,W_2|\Ym_e\right) &\leq& \epsilon
\end{eqnarray}
for sufficiently large $n$. The secrecy capacity region is the closure of
the set of all achievable rate pairs $(R_1,R_2)$ and is denoted as
$\mathbb{C}^{\textrm{IC-E}}$. Finally, we note that
the secrecy requirement imposed on the full message set implies the
secrecy of individual messages. In other words,
$\frac{1}{n}I(W_1,W_2;\Ym_e)\leq\epsilon$ implies
$\frac{1}{n}I(W_k;\Ym_e)\leq\epsilon$ for $k=1,2$.

%%%%%%%%%%%%%%%%%%%%%%%%%%%%%%%%%%%%%%%%%%%%%%%%%%%%%%%%%%%%%%%%%%%%%%%%%%%%%%

\subsection{Inner Bound}

In this section, we introduce the cooperative encoding
scheme, and derive an inner bound
to $\mathbb{C}^{\textrm{IC-E}}$.
The proposed strategy allows for cooperation in design of the randomized
codebooks, as well as in channel prefixing~\cite{Csiszar:Broadcast78}.
This way, each user will add only
\emph{a sufficient} amount of randomness as the other
user will help to increase the randomness seen by the eavesdropper.
The achievable secrecy rate region using this approach is
stated in the following result.

\begin{theorem}\label{thm:R}

\begin{equation}\Rc^{\textrm{IC-E}}
\triangleq
\textrm{ the closure of }
\left\{
\bigcup\limits_{p\in\Pc} \Rc(p)
\right\}
\subset \mathbb{C}^{\textrm{IC-E}},
\end{equation}
where $\Pc$ denotes the set of all joint distributions of the random variables
$Q$, $C_1$, $S_1$, $O_1$, $C_2$, $S_2$, $O_2$, $X_1$, $X_2$
that factors
as~\footnote{Here $Q, C_1, S_1, O_1, C_2, S_2$, and $O_2$ are
defined on arbitrary finite sets $\Qc$, $\Cc_1$, $\Sc_1$, $\Oc_1$,
$\Cc_2$, $\Sc_2$, and $\Oc_2$, respectively.}

\begin{eqnarray} 
p(q,c_1,s_1,o_1,c_2,s_2,o_2,x_1,x_2)&=&
p(q)p(c_1|q)p(s_1|q)p(o_1|q)
p(c_2|q)p(s_2|q)p(o_2|q) \nonumber\\
&&{}p(x_1|c_1,s_1,o_1)
p(x_2|c_2,s_2,o_2),
\end{eqnarray}
and $\Rc(p)$ is the closure of all $(R_1,R_2)$ satisfying
\begin{eqnarray}
R_1&=&R_{C_1}+R_{S_1}, \nonumber\\
R_2&=&R_{C_2}+R_{S_2}, \nonumber\\
(R_{C_1},R_{C_1}^x,R_{S_1},R_{S_1}^x,R_{C_2},R_{C_2}^x,R_{O_2}^x) &\in& \Rc_1(p), \nonumber\\
(R_{C_2},R_{C_2}^x,R_{S_2},R_{S_2}^x,R_{C_1},R_{C_1}^x,R_{O_1}^x) &\in& \Rc_2(p), \nonumber\\
(R_{C_1}^x,R_{S_1}^x,R_{O_1}^x,R_{C_2}^x,R_{S_2}^x,R_{O_2}^x) &\in& \Rc_e(p), \nonumber
\end{eqnarray}
and
\begin{eqnarray}\label{eq:R}
R_{C_1}\geq 0, R_{C_1}^x\geq 0, R_{S_1}\geq 0, R_{S_1}^x\geq 0, R_{O_1}^x\geq 0,\nonumber\\
R_{C_2}\geq 0, R_{C_2}^x\geq 0, R_{S_2}\geq 0, R_{S_2}^x\geq 0, R_{O_2}^x\geq 0,
\end{eqnarray}
for a given joint distribution $p$. $\Rc_1(p)$ is the set of all tuples
$(R_{C_1},R_{C_1}^x,R_{S_1},R_{S_1}^x,R_{C_2},R_{C_2}^x,R_{O_2}^x)$ satisfying
\begin{eqnarray}
R_{\Sc} + R_{\Sc}^x \leq I(\Sc;Y_1|\Sc^c,Q),\:
\forall \Sc \subset \{C_1,S_1,C_2,O_2\}.
\end{eqnarray}
$\Rc_2(p)$ is the rate region defined by reversing the
indices $1$ and $2$ everywhere in the expression for $\Rc_1(p)$.
$\Rc_e(p)$ is the set of all tuples
$(R_{C_1}^x,R_{S_1}^x,R_{O_1}^x,R_{C_2}^x,R_{S_2}^x,R_{O_2}^x)$
satisfying
\begin{eqnarray}\label{eq:R_e(p)}
R_{\Sc}^x & \leq & I(\Sc;Y_e|\Sc^c,Q),\:
\forall \Sc \subsetneq \{C_1,S_1,O_1,C_2,S_2,O_2\}, \nonumber\\
R_{\Sc}^x & = & I(\Sc;Y_e|Q),\:
\Sc = \{C_1,S_1,O_1,C_2,S_2,O_2\}.
\end{eqnarray}

\end{theorem}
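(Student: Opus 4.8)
The plan is to prove this achievability (inner bound) result by a random coding argument that fuses cooperative binning with channel prefixing, generalizing Wyner's wiretap construction to the six-codebook interference setting. First I would fix a distribution $p\in\Pc$ and, conditioned on a time-sharing sequence $\qv$ drawn i.i.d.\ from $p(q)$, construct six mutually independent random codebooks, one for each auxiliary variable $C_1,S_1,O_1,C_2,S_2,O_2$, each codeword generated i.i.d.\ from the corresponding conditional marginal (e.g.\ $p(c_1|q)$). The four message-bearing codebooks $C_1,S_1,C_2,S_2$ are partitioned into $2^{nR}$ bins of $2^{nR^x}$ codewords, so that the bin index carries the secret message while the within-bin index supplies randomization; the two dummy codebooks $O_1,O_2$ carry no message and consist of $2^{nR^x_{O}}$ pure-randomization codewords. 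To transmit $w_k=(w_{C_k},w_{S_k})$, user $k$ selects the designated bins in its $C_k$ and $S_k$ codebooks, draws a codeword uniformly at random from each selected bin, independently draws an $O_k$ codeword, and passes the triple $(c_k,s_k,o_k)$ through the memoryless prefix channel $p(x_k|c_k,s_k,o_k)$ to form the transmitted $\Xm_k$.

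The reliability analysis is the more routine part. At receiver $1$ I would use joint typicality decoding to recover the four codewords indexed by $\{C_1,S_1,C_2,O_2\}$ while treating the $S_2$ signal as noise; this is precisely a four-source multiple-access decoding problem with side information $Q$, and a standard packing-lemma error analysis shows the error probability vanishes provided the total (message-plus-randomization) rate of every subset obeys $R_{\Sc}+R_{\Sc}^x\le I(\Sc;Y_1|\Sc^c,Q)$ for all $\Sc\subset\{C_1,S_1,C_2,O_2\}$, i.e.\ the relevant tuple lies in $\Rc_1(p)$. The symmetric argument at receiver $2$ yields $\Rc_2(p)$. Averaging over the codebook ensemble and a standard expurgation then produce a single deterministic codebook satisfying $\max\{P_{e,1},P_{e,2}\}\le\epsilon$.

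The crux, and the step I expect to be the main obstacle, is the equivocation bound $\frac{1}{n}I(W_1,W_2;\Ym_e)\le\epsilon$. Letting $V$ denote the collection of within-bin and dummy (randomization) indices, I would start from the identity $I(W_1,W_2;\Ym_e|Q)=I(C_1,S_1,O_1,C_2,S_2,O_2;\Ym_e|Q)-I(V;\Ym_e|W_1,W_2,Q)$, using that the message bins together with $V$ determine all six codewords. The first term is at most $nI(C_1,S_1,O_1,C_2,S_2,O_2;Y_e|Q)+n\epsilon$ by typicality. The role of $\Rc_e(p)$ is to make the second term large enough to cancel it: the proper-subset inequalities $R^x_{\Sc}\le I(\Sc;Y_e|\Sc^c,Q)$ are exactly the multiple-access conditions guaranteeing that, given the true bins, the eavesdropper can jointly decode all of $V$ from $\Ym_e$ (and since the codewords are drawn i.i.d.\ independently of the bin labels, conditioning on the bins leaves these mutual informations unchanged). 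Fano's inequality then gives $H(V|\Ym_e,W_1,W_2,Q)\le n\epsilon$, so $I(V;\Ym_e|W_1,W_2,Q)\ge nR^x_{\textrm{total}}-n\epsilon$, where $R^x_{\textrm{total}}=R^x_{C_1}+R^x_{S_1}+R^x_{O_1}+R^x_{C_2}+R^x_{S_2}+R^x_{O_2}$. The full-set \emph{equality} $R^x_{\Sc}=I(\Sc;Y_e|Q)$ for $\Sc=\{C_1,S_1,O_1,C_2,S_2,O_2\}$ calibrates $R^x_{\textrm{total}}$ to equal the total information resolvable at the eavesdropper, so the two terms cancel and $\frac{1}{n}I(W_1,W_2;\Ym_e)\le\epsilon$. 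The delicate bookkeeping is ensuring the subset decodability and the full-set calibration hold simultaneously for the random ensemble, so that all of the eavesdropper's information is consumed resolving $V$ while nothing leaks about the messages; this is the multi-user generalization of Wyner's secrecy lemma and is where the careful accounting over all subset constraints is required.

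Finally I would combine the three regions: for any rate configuration admitting auxiliary randomization rates lying simultaneously in $\Rc_1(p)$, $\Rc_2(p)$, and $\Rc_e(p)$, the above construction is both reliable and secure, so the pair $(R_1,R_2)=(R_{C_1}+R_{S_1},R_{C_2}+R_{S_2})$ is achievable; the projection onto $(R_1,R_2)$ is already built into the definition of $\Rc(p)$ through the existentially quantified randomization rates. Taking the union over $p\in\Pc$ and the closure then yields $\Rc^{\textrm{IC-E}}\subset\mathbb{C}^{\textrm{IC-E}}$.
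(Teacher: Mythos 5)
Your proposal is correct and follows essentially the same route as the paper's proof: bin the four message-bearing codebooks, leave $O_1,O_2$ as pure randomization, decode $\{C_1,S_1,C_2,O_2\}$ at receiver 1 (and symmetrically at receiver 2) via MAC-style joint typicality to get $\Rc_1(p),\Rc_2(p)$, and establish secrecy by having the eavesdropper resolve all randomization indices given the bin indices (Fano plus the proper-subset MAC conditions), with the full-set equality calibrating the total randomization rate to cancel $I(C_1,S_1,O_1,C_2,S_2,O_2;Y_e|Q)$. The only difference is notational — you phrase the equivocation step as a mutual-information identity while the paper expands $H(W_1,W_2|\Ym_e)$ into the equivalent entropy chain — so the two arguments coincide.
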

\begin{proof}
We detail the coding scheme here.
The rest of the proof, given in Appendix~\ref{sec:ProofInnerBound},
shows that the proposed coding scheme satisfies both the reliability
and the security constraints.

Fix some $p(q)$, $p(c_1|q)$, $p(s_1|q)$, $p(o_1|q)$, $p(x_1|c_1,s_1,o_1)$,
$p(c_2|q)$, $p(s_2|q)$, $p(o_2|q)$, and $p(x_2|c_2,s_2,o_2)$ for the
channel given by $p(y_1,y_2,y_e|x_1,x_2)$.
Generate a random typical sequence $\qv^n$, where
$p(\qv^n)=\prod\limits_{i=1}^n p(q(i))$ and each entry is chosen
i.i.d. according to $p(q)$. Every node knows the sequence $\qv^n$.

\textbf{Codebook Generation:}

Consider transmitter $k\in\{1,2\}$ that has secret message
$W_k\in\Wc_k=\{1,2,\cdots,M_k\}$, where $M_k=2^{nR_k}$.
We construct each element in the codebook ensemble as follows.
\begin{itemize}
  \item Generate $M_{C_k}M_{C_k}^x=2^{n(R_{C_k}+R_{C_k}^x-\epsilon_1)}$
sequences $\cv_k^n$, each with probability
$p(\cv_k^n|\qv^n)=\prod\limits_{i=1}^n p(c_k(i)|q(i))$,
where $p(c_k(i)|q(i))=p(c_k|q)$ for each $i$. Distribute these into
$M_{C_k}=2^{nR_{C_k}}$  bins, where the bin index is $w_{C_k}$. Each bin has
$M_{C_k}^x=2^{n(R_{C_k}^x-\epsilon_1)}$ codewords,
where we denote the codeword index
as $w_{C_k}^x$. Represent each codeword with these two indices, i.e.,
$\cv_k^n(w_{C_k},w_{C_k}^x)$.
  \item Similarly, generate $M_{S_k}M_{S_k}^x=2^{n(R_{S_k}+R_{S_k}^x-\epsilon_1)}$
sequences $\sv_k^n$, each with probability
$p(\sv_k^n|\qv^n)=\prod\limits_{i=1}^n p(s_k(i)|q(i))$,
where $p(s_k(i)|q(i))=p(s_k|q)$ for each $i$. Distribute these into
$M_{S_k}=2^{nR_{S_k}}$  bins, where the bin index is $w_{S_k}$. Each bin has
$M_{S_k}^x=2^{n(R_{S_k}^x-\epsilon_1)}$ codewords,
where we denote the codeword index as $w_{S_k}^x$.
Represent each codeword with these two indices, i.e.,
$\sv_k^n(w_{S_k},w_{S_k}^x)$.
  \item Finally, generate $M_{O_k}^x=2^{n(R_{O_k}^x-\epsilon_1)}$ sequences $\ov_k^n$, each
with probability $p(\ov_k^n|\qv^n)=\prod\limits_{i=1}^n p(o_k(i)|q(i))$,
where $p(o_k(i)|q(i))=p(o_k|q)$ for each $i$.
Denote each sequence by index $w_{O_k}^x$ and represent each codeword
with this index, i.e., $\ov_k^n(w_{O_k}^x)$.
\end{itemize}
Choose $M_k=M_{C_k} M_{S_k}$, and assign each pair $(w_{C_k},w_{S_k})$
to a secret message $w_k$. Note that, $R_k=R_{C_k}+R_{S_k}$ for $k=1,2$.

Every node in the network knows these codebooks.

\textbf{Encoding:}

Consider again user $k\in\{1,2\}$.
To send $w_k\in\Wc_k$, user $k$ gets corresponding indices $w_{C_k}$
and $w_{S_k}$. Then user $k$ obtains the following codewords:
\begin{itemize}
  \item From the codebook for $C_k$, user $k$ randomly chooses a
  codeword in bin $w_{C_k}$ according to the uniform distribution,
  where the codeword index is denoted by
  $w_{C_k}^x$ and it gets the corresponding entry of the codebook,
  i.e. $\cv_k^n(w_{C_k},w_{C_k}^x)$.
  \item Similarly, from the codebook for $S_k$, user $k$ randomly chooses a
  codeword in bin $w_{S_k}$ according to the uniform distribution,
  where the codeword index is denoted by
  $w_{S_k}^x$ and it gets the corresponding entry of the codebook,
  i.e. $\sv_k^n(w_{S_k},w_{S_k}^x)$.
  \item Finally, from the codebook for $O_k$, it randomly chooses
  a codeword, which is denoted by $\ov_k^n(w_{O_k}^x)$.
\end{itemize}
Then, user $k$, generates the channel inputs $\xv_k^n$, where each
entry is chosen according to $p(x_k|c_k,s_k,o_k)$ using the codewords
$\cv_k^n(w_{C_k},w_{C_k}^x)$, $\sv_k^n(w_{S_k},w_{S_k}^x)$, and
$\ov_k^n(w_{O_k}^x)$; and it transmits the constructed $\xv_k^n$
in $n$ channel uses. See Fig.~\ref{fig:3}.

\begin{figure}[t] %3
    \centering
    \includegraphics[width=0.5\columnwidth]{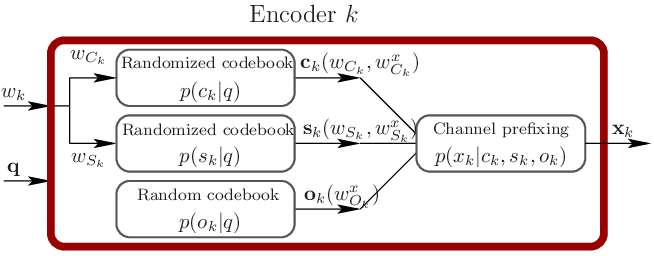}
    \caption{Proposed encoder structure for the IC-E.}\label{fig:3}
\end{figure}

\textbf{Decoding:}

Here we remark that although each user needs to decode only its own message,
we also require receivers to decode common and other information of
the other transmitter.
Suppose receiver $1$ has received $\yv_1^n$.
Let $A_{1,\epsilon}^n$ denote the set of typical $(\qv^n,\cv_1^n,
\sv_1^n,\cv_2^n,\ov_2^n,\yv_1^n)$ sequences.
Decoder $1$ chooses $(w_{C_1},w_{C_1}^x,w_{S_1},w_{S_1}^x,
w_{C_2},w_{C_2}^x,w_{O_2}^x)$ s.t.
$(\qv^n,\cv_1^n(w_{C_1},w_{C_1}^x),\sv_1^n(w_{S_1},w_{S_1}^x),
\cv_2^n(w_{C_2},w_{C_2}^x),\ov_2^n(w_{O_2}^x),$
$\yv_1^n)\in A_{1,\epsilon}^n,$
if such a tuple exists and is unique. Otherwise, the decoder declares an error.
Decoding at receiver $2$ is symmetric and a description of it can
be obtained by reversing the indices $1$ and $2$ above.

Refer to Appendix~\ref{sec:ProofInnerBound} for the rest of the proof.
\end{proof}

The following remarks are now in order.

\begin{enumerate}
  \item The auxiliary random variable $Q$ serves as a time-sharing parameter.
  \item The auxiliary variable $C_1$ is used to construct the \emph{common} secure
  signal of transmitter $1$ that has to be decoded at both receivers, where
  the randomized encoding technique of~\cite{Wyner:The75} is used for this
  construction. Similarly, $C_2$ is used for the \emph{common} secured signal of user $2$.
  \item The auxiliary variable $S_1$ is used to construct the \emph{self} secure signal
  that has to be decoded at receiver $1$ but not at receiver $2$, where
  the randomized encoding technique of~\cite{Wyner:The75} is used for
  this construction. Similarly, $S_2$ is used for the \emph{self} secure signal of user $2$.
  \item The auxiliary variable $O_1$ is used to construct the \emph{other} signal of
  transmitter $1$ that has to be decoded at receiver $2$ but not at
  receiver $1$ (conventional random coding~\cite{ThomasAndCover} is used
  to construct this signal). Similarly, $O_2$ is used for the \emph{other}
  signal of user $2$. Note that we use $R_{O_1}^x$, $R_{O_2}^x$, and set $R_{O_1}=R_{O_2}=0$.
  \item Compared to the Han-Kobayashi scheme~\cite{Han:A81}, the common and
self random variables are constructed with randomized (two-dimensional) codebooks.
This way, they are used not only for transmitting information, but
also for adding randomness.
Moreover, we have two additional random variables in this achievability
scheme. These extra random
variables, namely $O_1$ and $O_2$, are used to facilitate
cooperation among the network users by adding extra randomness
to the channel which has to be decoded by the non-intended receiver. We
note that, compared to random variables $C_k$ and $S_k$, the randomization
added via $O_k$ is considered as interference at the receiver $k$.

\item The gain that can be leveraged from cooperative encoding
can be attributed to the freedom in the allocation of randomization
rates at the two users (e.g., see~\eqref{eq:R_e(p)}). This allows the
users to cooperatively add randomness to impair the eavesdropper with a
minimal impact on the achievable rate at the legitimate receivers. Cooperative
channel prefixing, on the other hand, can be achieved by the joint
optimization of the probabilistic channel prefixes
$p(x_1|c_1,s_1,o_1)$ and $p(x_2|c_2,s_2,o_2)$.
\end{enumerate}

%%%%%%%%%%%%%%%%%%%%%%%%%%%%%%%%%%%%%%%%%%%%%%%%%%%%%%%%%%%%%%%%%%%%%%%%%%%%%%

\subsection{Outer Bounds}

\begin{theorem}\label{thm:RO1}
For any $(R_1,R_2) \in \mathbb{C}^{\textrm{IC-E}}$,
\begin{eqnarray}
R_1 & \leq & \max\limits_{p\in\Pc_{O}}\:
I(V_1;Y_1|V_2,U)-I(V_1;Y_e|U)\\
R_2 & \leq & \max\limits_{p\in\Pc_{O}}\:
I(V_2;Y_2|V_1,U)-I(V_2;Y_e|U),
\end{eqnarray}
where
$\Pc_{O}$ is the set of joint distributions that factors as
$$p(u,v_1,v_2,x_1,x_2)
= p(u)p(v_1|u)p(v_2|u)p(x_1|v_1)p(x_2|v_2).$$
\end{theorem}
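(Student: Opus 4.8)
The plan is to prove the two outer bounds on $R_1$ and $R_2$ using a single-letterization argument built on Fano's inequality and the secrecy (equivocation) constraint. By symmetry, I would prove the bound on $R_1$ in full and obtain the bound on $R_2$ by interchanging the roles of the two users, so the core work is establishing $R_1 \leq I(V_1;Y_1|V_2,U) - I(V_1;Y_e|U)$ for an appropriately chosen auxiliary pair $(U,V_1,V_2)$ satisfying the Markov factorization of $\Pc_O$.

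First I would start from the achievability hypothesis: since $(R_1,R_2)\in\mathbb{C}^{\textrm{IC-E}}$, for every $\epsilon>0$ there is a code with $\max\{P_{e,1},P_{e,2}\}\leq\epsilon$ and $R_1+R_2-\frac{1}{n}H(W_1,W_2|\Ym_e)\leq\epsilon$. To isolate a bound on $R_1$ alone, I would condition on $W_2$ throughout, treating it as side information given to a genie. Fano's inequality applied to receiver $1$ gives $H(W_1|\Ym_1,W_2)\leq n\epsilon_n$ with $\epsilon_n\to 0$. The secrecy constraint must be converted into a statement about $W_1$; here I would use the stated fact that secrecy of the joint message implies $\frac{1}{n}I(W_1;\Ym_e)\leq\epsilon$, and more usefully work with the conditional equivocation $H(W_1|\Ym_e,W_2)$, which is lower-bounded by $nR_1 - n\epsilon'$ because the full-message secrecy constraint can be shown to force $I(W_1;\Ym_e|W_2)$ to be small as well. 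Combining, I would write
\begin{eqnarray}
nR_1 &\leq& I(W_1;\Ym_1|W_2) - I(W_1;\Ym_e|W_2) + n\epsilon_n, \nonumber
\end{eqnarray}
which is the standard wiretap-style starting point, now with $W_2$ as conditioning.

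Next I would single-letterize the difference $I(W_1;\Ym_1|W_2)-I(W_1;\Ym_e|W_2)$. The key identity is the Csisz\'ar sum lemma, which lets me rewrite the telescoped sum $\sum_i [I(W_1;Y_1(i)|\Ym_1^{i-1},\Ym_{e,i+1}^n,W_2) - I(W_1;Y_e(i)|\Ym_1^{i-1},\Ym_{e,i+1}^n,W_2)]$ in the required conditional form. I would then define the auxiliary variables as $U \triangleq (\Ym_1^{i-1},\Ym_{e,i+1}^n,W_2,J)$ augmented with a time index $J$ uniform on $\{1,\dots,n\}$, $V_1 \triangleq (W_1,U)$, and $V_2 \triangleq W_2$-derived component, with $Y_1=Y_1(J)$, $Y_e=Y_e(J)$. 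The single-letter bound then reads $R_1 \leq I(V_1;Y_1|V_2,U) - I(V_1;Y_e|U)$ after identifying terms. I would similarly need to confirm that with these definitions $V_2$ can be dropped from the conditioning of the second (eavesdropper) term, matching the asymmetry in the theorem statement, which follows because $W_2$ is absorbed appropriately into $U$ for that term.

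The main obstacle I anticipate is verifying the Markov structure $p(u)p(v_1|u)p(v_2|u)p(x_1|v_1)p(x_2|v_2)$ for the chosen auxiliaries. The factorizations $p(x_1|v_1)$ and $p(x_2|v_2)$ require that, conditioned on $V_1$ (respectively $V_2$), the input $X_1$ (respectively $X_2$) is independent of everything else in $U$ and of the other user's variables; this rests crucially on the independence of $W_1$ and $W_2$ and on the memoryless structure of the channel combined with the independence of the two transmitters' encoders (each $X_k(i)$ is a function of $W_k$ alone plus local randomness). Establishing that $V_1$ and $V_2$ are conditionally independent given $U$ — so that they arise from a common "past" $U$ but are otherwise decoupled as the two independent messages demand — is the delicate point, and I would treat it carefully by exhibiting the precise dependence of each $X_k(i)$ on the defined auxiliaries and invoking the product structure of the encoders. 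Once the Markov conditions are checked, the bound follows, and the $R_2$ bound is immediate by symmetry.
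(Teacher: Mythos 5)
Your proposal shares the paper's overall skeleton --- Fano's inequality plus the equivocation constraint, then the Csisz\'ar sum identity and single-letterization --- but your handling of $W_2$ is genuinely different. The paper never conditions on $W_2$: it first proves Lemma~\ref{thm:secrecylemma} (joint secrecy implies $R_1-\frac{1}{n}H(W_1|\Ym_e)\le\epsilon$), runs the whole chain unconditioned, defines $U(i)=(\Ym_1^{i-1},\tilde{\Ym}_e^{i+1},i)$, $V_1(i)=(U(i),W_1)$, $V_2(i)=(U(i),W_2)$, and only at the very end inserts $V_2$ into the legitimate-receiver term via $I(V_1;Y_1|U)\le I(V_1;Y_1|V_2,U)$, justified by the Markov chain $V_1\to U\to V_2$. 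You instead hand $W_2$ to receiver $1$ and to the eavesdropper as a genie from the outset. Your conditional-secrecy step is correct: $H(W_1,W_2|\Ym_e)\ge n(R_1+R_2-\epsilon)$ together with $H(W_2|\Ym_e)\le nR_2$ gives $H(W_1|\Ym_e,W_2)\ge n(R_1-\epsilon)$, which in fact strengthens the paper's lemma. The two routes produce different intermediate bounds, both tighter than the theorem's stated expression: yours gives $I(V_1;Y_1|V_2,U)-I(V_1;Y_e|V_2,U)$, the paper's gives $I(V_1;Y_1|U)-I(V_1;Y_e|U)$, and each relaxes to the stated asymmetric form by one application of the same Markov-chain argument, applied to opposite terms.

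The step you leave vague is precisely that last relaxation, and as written it does not parse: you put $W_2$ inside $U$, call $V_2$ a ``$W_2$-derived component,'' and claim $V_2$ can be dropped from the eavesdropper term ``because $W_2$ is absorbed appropriately into $U$ for that term.'' There is only one $U$ in the theorem; $W_2$ cannot sit inside $U$ for the eavesdropper term while sitting inside $V_2$ (outside $U$) for the legitimate-receiver term. Moreover, if $W_2$ really is absorbed into $U$, the natural $V_2$ becomes a function of $U$, and the requirement $p(x_2|v_2)$ in $\Pc_O$ must then be re-examined from scratch. The clean repair is to define the auxiliaries exactly as the paper does ($W_2$ kept out of $U$, $V_2(i)=(U(i),W_2)$), so that your genie derivation reads $R_1\le I(V_1;Y_1|V_2,U)-I(V_1;Y_e|V_2,U)$, and then to observe that the factorization $V_1\to U\to V_2$ gives $H(V_1|V_2,U)=H(V_1|U)$, hence $I(V_1;Y_e|V_2,U)=H(V_1|U)-H(V_1|Y_e,V_2,U)\ge H(V_1|U)-H(V_1|Y_e,U)=I(V_1;Y_e|U)$, so the eavesdropper term may be relaxed to reach the theorem's form. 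With that fix your argument is sound to the same extent as the paper's: both rest on the delicate verification that the induced auxiliaries factor as $p(u)p(v_1|u)p(v_2|u)p(x_1|v_1)p(x_2|v_2)$, which you rightly flag as the crux and which the paper itself only asserts (the supporting argument, from the independence of the two encoders, is spelled out in the proof of Theorem~\ref{thm:RO2}).
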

\begin{proof}
Refer to Appendix~\ref{sec:ProofOuterBound1}.
\end{proof}

\begin{theorem}\label{thm:RO2}
For channels satisfying
\begin{eqnarray}\label{eq:ProofOutCond1}
I(V_2;Y_2|V_1)\leq I(V_2;Y_1|V_1)
\end{eqnarray}
for any distribution that factors as
$p(v_1,v_2,x_1,x_2)=p(v_1)p(v_2)p(x_1|v_1)p(x_2|v_2)$,
an upper bound on the sum-rate of the IC-E is given by

\begin{eqnarray}
R_1 + R_2 & \leq & \max\limits_{p\in\Pc_{O}}\:
I(V_1,V_2;Y_1|U)-I(V_1,V_2;Y_e|U),
\end{eqnarray}
where $U$, $V_1$, and $V_2$ are auxiliary random variables, and
$\Pc_{O}$ is the set of joint distributions that factors as
$$p(u,v_1,v_2,x_1,x_2)
= p(u)p(v_1|u)p(v_2|u)p(x_1|v_1)p(x_2|v_2).$$
\end{theorem}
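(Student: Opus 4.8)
The plan is to exploit condition~\eqref{eq:ProofOutCond1} to show that, although $W_2$ is only required to be decodable at receiver~$2$, it is in fact decodable at receiver~$1$ as well; this collapses the sum-rate problem into a single combined wiretap channel with message $(W_1,W_2)$, legitimate output $Y_1$, and eavesdropper output $Y_e$, to which a Csisz\'ar--K\"orner-type converse applies. Throughout I let $\epsilon_n\to 0$ denote the Fano terms and use that the messages are independent and uniform, so that $\frac1n H(W_1,W_2)=R_1+R_2$, together with the joint secrecy requirement $\frac1n I(W_1,W_2;\Ym_e)\le\epsilon$ read off from~\eqref{eq:secrecy}.

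First I would establish the decodability transfer. Fano's inequality at receiver~$1$ gives $H(W_1|\Ym_1)\le n\epsilon_n$ and at receiver~$2$ gives $H(W_2|\Ym_2)\le n\epsilon_n$. The essential step is to promote~\eqref{eq:ProofOutCond1} from its single-letter form to the block-level inequality $I(W_2;\Ym_1|W_1)\ge I(W_2;\Ym_2|W_1)$. Because the encoders are separate, with $\Xm_1$ a function of $W_1$ and local randomness and $\Xm_2$ a function of $W_2$ and local randomness, and $W_1$ is independent of $W_2$, conditioning on $W_1$ plays the role of conditioning on $V_1$ while $W_2$ plays the role of $V_2$; since~\eqref{eq:ProofOutCond1} holds for \emph{every} product distribution of the stated form, it can be applied symbol-by-symbol after conditioning on the appropriate pasts and then summed. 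Granting this, $H(W_2|W_1,\Ym_1)=H(W_2|W_1)-I(W_2;\Ym_1|W_1)\le H(W_2|W_1)-I(W_2;\Ym_2|W_1)=H(W_2|W_1,\Ym_2)\le H(W_2|\Ym_2)\le n\epsilon_n$, hence $H(W_1,W_2|\Ym_1)\le 2n\epsilon_n$.

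Next I would assemble the main chain. Writing $n(R_1+R_2)=H(W_1,W_2)=I(W_1,W_2;\Ym_1)+H(W_1,W_2|\Ym_1)$ and invoking the decodability bound just obtained together with the secrecy bound $I(W_1,W_2;\Ym_e)\le n\epsilon$, one gets $n(R_1+R_2)\le I(W_1,W_2;\Ym_1)-I(W_1,W_2;\Ym_e)+n\delta_n$ with $\delta_n\to 0$. The difference $I(W_1,W_2;\Ym_1)-I(W_1,W_2;\Ym_e)$ is then single-letterized by the Csisz\'ar sum identity: expanding $I(W_1,W_2;\Ym_1)=\sum_i I(W_1,W_2;Y_1(i)|\Ym_1^{i-1})$ and $I(W_1,W_2;\Ym_e)=\sum_i I(W_1,W_2;Y_e(i)|\Ym_e^{i+1})$ and exchanging the two genie terms (which cancel in pairs, with and without $(W_1,W_2)$ in the conditioning) yields $\sum_i\big[I(W_1,W_2;Y_1(i)|U_i)-I(W_1,W_2;Y_e(i)|U_i)\big]$ with $U_i=(\Ym_1^{i-1},\Ym_e^{i+1})$. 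Setting $V_{1,i}=(W_1,U_i)$ and $V_{2,i}=(W_2,U_i)$ so that $(V_{1,i},V_{2,i})$ reveals exactly $(W_1,W_2)$ given $U_i$, introducing a uniform time-sharing index $J$ and $U=(J,U_J)$, $V_1=V_{1,J}$, $V_2=V_{2,J}$, and letting $n\to\infty$ gives $R_1+R_2\le I(V_1,V_2;Y_1|U)-I(V_1,V_2;Y_e|U)$.

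The main obstacle is to make both single-letterization steps consistent with the product factorization defining $\Pc_O$. The final expression is dominated by $\max_{p\in\Pc_O}$ only if the induced auxiliaries satisfy $p(v_1|u)p(v_2|u)$ and $p(x_1|v_1)p(x_2|v_2)$, i.e. $V_1\perp V_2\mid U$ together with the channel-prefix Markov chains $X_1-V_1-(U,V_2,X_2)$ and $X_2-V_2-(U,V_1,X_1)$. These do not hold automatically for $U_i=(\Ym_1^{i-1},\Ym_e^{i+1})$, since conditioning on output histories that depend on both messages can couple $W_1$ and $W_2$; the only lever for restoring them is the independence of the messages and the per-encoder structure $\Xm_k=f_k(W_k,\cdot)$. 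I would therefore spend the bulk of the argument choosing the genie variables and verifying the attendant Markov relations so that independence is preserved while the Csisz\'ar exchange still closes --- which is precisely the same bookkeeping needed to lift~\eqref{eq:ProofOutCond1} to the block level in the decodability-transfer step.
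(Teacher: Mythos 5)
Your proposal is correct and follows essentially the same route as the paper's proof: lift \eqref{eq:ProofOutCond1} to the block level (the paper invokes the lemma of Costa and El~Gamal and then sets $V_1(i)=W_1$, $V_2(i)=W_2$) to conclude $H(W_2|\Ym_1,W_1)\le H(W_2|\Ym_2,W_1)$ and hence $H(W_1,W_2|\Ym_1)\le n(\epsilon_1+\epsilon_2)$; then bound $n(R_1+R_2)$ by $I(W_1,W_2;\Ym_1)-I(W_1,W_2;\Ym_e)$ plus vanishing terms and single-letterize via the Csisz\'ar sum identity with exactly your auxiliaries $U(i)=(\Ym_1^{i-1},\tilde{\Ym}_e^{i+1},i)$, $V_k(i)=(U(i),W_k)$, and a uniform time index $J$. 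The only divergence is at the step you flag as the ``main obstacle'': the paper expends no machinery there, simply asserting that $V_1\to U\to V_2$ and the chains $X_1\to V_1\to V_2$, $X_2\to V_2\to V_1$ follow from the definitions, the per-encoder structure, and the independence of $W_1$ and $W_2$, which is then claimed to yield the factorization $p(u)p(v_1|u)p(v_2|u)p(x_1|v_1)p(x_2|v_2)$ required for membership in $\Pc_O$. So your concern --- that conditioning on output histories which depend on both messages can couple $W_1$ and $W_2$ --- is legitimate and identifies the thinnest point of the published argument, but the paper does not resolve it with any additional bookkeeping; in every step the paper actually carries out, your sketch matches it.
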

\begin{proof}
Refer to Appendix~\ref{sec:ProofOuterBound2}.
\end{proof}

The previous sum-rate upper bound also holds for the set of
channels satisfying
\begin{eqnarray}
I(V_2;Y_2)\leq I(V_2;Y_1)
\end{eqnarray}
for any distribution that factors as
$p(v_1,v_2,x_1,x_2)=p(v_1)p(v_2)p(x_1|v_1)p(x_2|v_2)$.
Finally, it is evident that one can obtain another upper bound
by reversing the indices $1$ and $2$ in above expressions.

%%%%%%%%%%%%%%%%%%%%%%%%%%%%%%%%%%%%%%%%%%%%%%%%%%%%%%%%%%%%%%%%%%%%%%%%%%%%%%

\subsection{Special Cases}

This section focuses on few special cases, where sharp results
on the secrecy capacity region can be derived. In all these
scenarios, achievability is established using the proposed
cooperative encoding scheme. To simplify the
presentation, we first define the following set of
probability distributions. For random variables $T_1$ and $T_2$,
\begin{eqnarray}
\Pc(T_1,T_2) &\triangleq& \big\{p(q,t_1,t_2,x_1,x_2)\: | \:
p(q,t_1,t_2,x_1,x_2) \:{=}
p(q)p(t_1|q)p(t_2|q)p(x_1|t_1)p(x_2|t_2)\big\}.\nonumber
\end{eqnarray}

\begin{corollary}\label{thm:C1}
If the IC-E satisfies
\begin{eqnarray}\label{eq:Speceq1}
I(V_2;Y_2|V_1,Q) \leq I(V_2;Y_e|Q) \nonumber\\
I(V_2;Y_e|V_1,Q) \leq I(V_2;Y_1|Q)
\end{eqnarray}
for all input distributions that factors as
$p(q)p(v_1|q)p(v_2|q)p(x_1|v_1)p(x_2|v_2)$,
then its secrecy capacity region is given by
$$\mathbb{C}^{\textrm{IC-E}} = \textrm{ the closure of }
\left\{\bigcup\limits_{p\in\Pc(S_1,O_2)} \Rc_{S1}(p)\right\},$$
where
$\Rc_{S1}(p)$ is the set of rate-tuples $(R_1,R_2)$ satisfying
\begin{eqnarray}
R_1 & \leq & \left [I(S_1;Y_1|O_2,Q)-I(S_1;Y_e|Q)\right]^+\nonumber\\
R_2 & = & 0,
\end{eqnarray}
for any $p\in\Pc(S_1,O_2)$.
\end{corollary}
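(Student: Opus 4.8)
The plan is to prove the two inclusions separately, obtaining achievability from the general inner bound of Theorem~\ref{thm:R} and the converse from the per-user outer bound of Theorem~\ref{thm:RO1}. Throughout I identify the auxiliaries of the corollary with those of the two theorems by setting $V_1=S_1$ and $V_2=O_2$ and taking $U=Q$, so that the distribution class $\Pc(S_1,O_2)$ coincides with $\Pc_O$ and the two bounds can be compared term by term.

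For achievability I would specialize $\Rc^{\textrm{IC-E}}$ by deactivating every auxiliary except $S_1$ and $O_2$: set $C_1$, $C_2$, $S_2$, and $O_1$ to constants, so that $R_{C_1}=R_{C_2}=R_{S_2}=0$ and all their randomization rates vanish, giving $R_1=R_{S_1}$ and $R_2=0$, while the channel prefixes reduce to $p(x_1|s_1)$ and $p(x_2|o_2)$. With this choice the only active decoding constraint in $\Rc_1(p)$ is $R_{S_1}+R_{S_1}^x\le I(S_1;Y_1|O_2,Q)$, whereas the equivocation constraints of $\Rc_e(p)$ collapse to $R_{S_1}^x+R_{O_2}^x=I(S_1,O_2;Y_e|Q)$ together with $R_{O_2}^x\le I(O_2;Y_e|S_1,Q)$. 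To maximize $R_{S_1}$ I drive $R_{S_1}^x$ to its minimum by taking $R_{O_2}^x=I(O_2;Y_e|S_1,Q)$; the chain rule then forces $R_{S_1}^x=I(S_1;Y_e|Q)$, and substituting yields $R_1\le I(S_1;Y_1|O_2,Q)-I(S_1;Y_e|Q)$, which is the claimed rate once the $[\cdot]^+$ is appended to respect $R_1\ge 0$.

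The step that requires care is checking that this single allocation is simultaneously feasible for every subset constraint, and this is where I expect the main obstacle to lie. The decoding constraints indexed by $\Sc=\{O_2\}$ and $\Sc=\{S_1,O_2\}$ in $\Rc_1(p)$ reduce, after cancellation, to $I(O_2;Y_e|S_1,Q)\le I(O_2;Y_1|S_1,Q)$ and $I(O_2;Y_e|S_1,Q)\le I(O_2;Y_1|Q)$ respectively; the latter is precisely the second hypothesis $I(V_2;Y_e|V_1,Q)\le I(V_2;Y_1|Q)$, and the former follows by combining it with the conditional independence $S_1\perp O_2\mid Q$, which gives $I(O_2;Y_1|S_1,Q)\ge I(O_2;Y_1|Q)$. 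The same independence yields $I(S_1;Y_e|O_2,Q)\ge I(S_1;Y_e|Q)=R_{S_1}^x$, confirming the $\Rc_e(p)$ subset inequality for $\Sc=\{S_1\}$. Intuitively, this is the statement that the helper signal $O_2$ can be decoded at receiver~$1$ while simultaneously supplying enough ambiguity at the eavesdropper to cover the randomization deficit of $S_1$.

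For the converse I would invoke Theorem~\ref{thm:RO1}. Its bound on $R_1$ reads $R_1\le\max_{p\in\Pc_O}I(V_1;Y_1|V_2,U)-I(V_1;Y_e|U)$, which under the identification is exactly $\max_{p\in\Pc(S_1,O_2)}I(S_1;Y_1|O_2,Q)-I(S_1;Y_e|Q)$ and matches the achievable expression after the $[\cdot]^+$. For $R_2$, the companion bound $R_2\le\max_{p\in\Pc_O}I(V_2;Y_2|V_1,U)-I(V_2;Y_e|U)$ is nonpositive by the first hypothesis $I(V_2;Y_2|V_1,Q)\le I(V_2;Y_e|Q)$, forcing $R_2=0$. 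Since inner and outer descriptions range over the same distribution class and agree on both coordinates, the two regions coincide and equal $\mathbb{C}^{\textrm{IC-E}}$; notably the sum-rate bound of Theorem~\ref{thm:RO2} is not needed here.
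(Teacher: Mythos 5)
Your proposal is correct and follows essentially the same route as the paper: achievability by specializing Theorem~\ref{thm:R} to the auxiliaries $S_1$ and $O_2$ with exactly the rate allocation $R_{S_1}=I(S_1;Y_1|O_2,Q)-I(S_1;Y_e|Q)$, $R_{S_1}^x=I(S_1;Y_e|Q)$, $R_{O_2}^x=I(O_2;Y_e|S_1,Q)$, and the converse by applying both bounds of Theorem~\ref{thm:RO1} with the first hypothesis of \eqref{eq:Speceq1} forcing $R_2=0$. Your explicit verification of the remaining subset constraints (using the conditional independence of $S_1$ and $O_2$ given $Q$ and the second hypothesis) is detail the paper leaves implicit, but it is the same argument.
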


\begin{proof}
Refer to Appendix~\ref{sec:ProofC1}.
\end{proof}

\begin{corollary}\label{thm:C2}
If the IC-E satisfies
\begin{eqnarray}\label{eq:Speceq2}
I(V_2;Y_e|Q)  \leq  I(V_2;Y_1|Q)  \leq  I(V_2;Y_2|Q) \nonumber\\
I(V_1;Y_e|V_2,Q)  \leq  I(V_1;Y_1|V_2,Q) \nonumber\\
I(V_2;Y_2|V_1,Q)  \leq  I(V_2;Y_1|V_1,Q)
\end{eqnarray}
for all input distributions that factors as
$p(q)p(v_1|q)p(v_2|q)p(x_1|v_1)p(x_2|v_2)$,
then its secrecy sum capacity is given as follows.
\begin{eqnarray}
\max\limits_{(R_1,R_2)\in\Cc^{\textrm{IC-E}}} R_1+R_2 
=\max\limits_{p\in\Pc(S_1,C_2)}
I(S_1,C_2;Y_1|Q)-I(S_1,C_2;Y_e|Q).\nonumber
\end{eqnarray}
\end{corollary}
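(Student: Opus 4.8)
The plan is to prove the two matching bounds: a converse showing that every achievable sum rate is at most the claimed quantity, and an achievability argument placing a rate pair with the claimed sum inside $\mathbb{C}^{\textrm{IC-E}}$.

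For the converse I would invoke Theorem~\ref{thm:RO2}. Its hypothesis \eqref{eq:ProofOutCond1}, namely $I(V_2;Y_2|V_1)\leq I(V_2;Y_1|V_1)$ for every $p(v_1)p(v_2)p(x_1|v_1)p(x_2|v_2)$, is precisely the third inequality of \eqref{eq:Speceq2} specialized to a degenerate (constant) $Q$. Hence Theorem~\ref{thm:RO2} applies and gives, for every achievable $(R_1,R_2)$,
\[
R_1+R_2\leq\max\limits_{p\in\Pc_O} I(V_1,V_2;Y_1|U)-I(V_1,V_2;Y_e|U).
\]
Since the factorization defining $\Pc_O$ coincides with that defining $\Pc(S_1,C_2)$ under the relabeling $(U,V_1,V_2)\mapsto(Q,S_1,C_2)$, the right-hand side equals $\max_{p\in\Pc(S_1,C_2)} I(S_1,C_2;Y_1|Q)-I(S_1,C_2;Y_e|Q)$, which is the asserted upper bound on the sum capacity.

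For achievability I would specialize the inner bound of Theorem~\ref{thm:R} by letting $C_1,O_1,S_2,O_2$ be constants, so that only the self signal $S_1$ of user~$1$ and the common signal $C_2$ of user~$2$ remain active; this restricts the admissible distributions to $\Pc(S_1,C_2)$ and yields $R_1=R_{S_1}$, $R_2=R_{C_2}$. The essential freedom is in the randomization rates, which I would fix as
\[
R_{C_2}^x=I(C_2;Y_e|Q),\qquad R_{S_1}^x=I(S_1;Y_e|C_2,Q).
\]
Their sum is $I(S_1,C_2;Y_e|Q)$, meeting the full-set equality in $\Rc_e(p)$ exactly, and the conditional independence $S_1\perp C_2\,|\,Q$ yields $I(C_2;Y_e|Q)\leq I(C_2;Y_e|S_1,Q)$, so the remaining strict-subset constraints of $\Rc_e(p)$ also hold. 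With this choice I would select the corner point
\[
R_{S_1}=I(S_1;Y_1|C_2,Q)-I(S_1;Y_e|C_2,Q),\qquad R_{C_2}=I(C_2;Y_1|Q)-I(C_2;Y_e|Q),
\]
whose sum equals $I(S_1,C_2;Y_1|Q)-I(S_1,C_2;Y_e|Q)$ by the chain rule.

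It then remains to verify that this pair lies in $\Rc_1(p)\cap\Rc_2(p)$, which is where the hypotheses of \eqref{eq:Speceq2} enter and where the bulk of the work lies. Nonnegativity of $R_{S_1}$ and $R_{C_2}$ follows from the second condition and from the first half of the first condition, respectively. In $\Rc_1(p)$, the singleton constraint on $S_1$ holds with equality, the singleton constraint on $C_2$ reduces to $I(C_2;Y_1|Q)\leq I(C_2;Y_1|S_1,Q)$ (again from $S_1\perp C_2\,|\,Q$), and the joint $\{S_1,C_2\}$ constraint holds with equality. In $\Rc_2(p)$, after reversing indices the only active constraint bounds $R_{C_2}+R_{C_2}^x$ by $I(C_2;Y_2|Q)$, and this is exactly where the second half of the first condition, $I(C_2;Y_1|Q)\leq I(C_2;Y_2|Q)$, is needed, ensuring that $C_2$ stays decodable at its intended receiver once it is decodable at receiver~$1$. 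The main obstacle is thus careful bookkeeping: matching each subset inequality of $\Rc_1$, $\Rc_2$, and $\Rc_e$ against the correct hypothesis, with the conditional-independence identities doing the work of converting unconditioned mutual informations into conditioned ones.
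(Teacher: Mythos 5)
Your proposal is correct and follows essentially the same route as the paper: achievability from Theorem~\ref{thm:R} restricted to $S_1$ and $C_2$ with exactly the rate assignment $R_{S_1}=I(S_1;Y_1|C_2,Q)-I(S_1;Y_e|C_2,Q)$, $R_{S_1}^x=I(S_1;Y_e|C_2,Q)$, $R_{C_2}=I(C_2;Y_1|Q)-I(C_2;Y_e|Q)$, $R_{C_2}^x=I(C_2;Y_e|Q)$, and the converse via Theorem~\ref{thm:RO2}. Your write-up is in fact more explicit than the paper's, since it verifies each subset constraint of $\Rc_1(p)$, $\Rc_2(p)$, and $\Rc_e(p)$ against the specific hypotheses of \eqref{eq:Speceq2}, which the paper leaves implicit.
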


\begin{proof}
Refer to Appendix~\ref{sec:ProofC2}.
\end{proof}

\begin{corollary}\label{thm:C3}
If the IC-E satisfies
\begin{eqnarray}\label{eq:Speceq3}
I(V_2;Y_e|Q)  \leq  I(V_2;Y_1|V_1,Q)  \leq  I(V_2;Y_e|V_1,Q) \nonumber\\
I(V_2;Y_2|V_1,Q)  \leq  I(V_2;Y_1|V_1,Q)
\end{eqnarray}
for all input distributions that factors as
$p(q)p(v_1|q)p(v_2|q)p(x_1|v_1)p(x_2|v_2)$,
then its secrecy sum capacity is given as follows.
\begin{eqnarray}
\max\limits_{(R_1,R_2)\in\Cc^{\textrm{IC-E}}} R_1+R_2 
=\max\limits_{p\in\Pc(S_1,O_2)}
I(S_1,O_2;Y_1|Q)-I(S_1,O_2;Y_e|Q).\nonumber
\end{eqnarray}
We also note that, in this case, $O_2$ will not increase
the sum-rate, and hence, we can set $|\Oc_2|=1$
\end{corollary}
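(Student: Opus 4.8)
The claim is a sum-capacity, so the plan is to prove a matching achievable sum rate and converse. The achievable side comes from the inner bound of Theorem~\ref{thm:R}, specialized so that only $S_1$ and $O_2$ are active (set $C_1, S_2, C_2, O_1$ to constants and $R_2 = 0$); transmitter~$2$ then carries no message and merely injects the randomness $O_2$, which receiver~$1$ decodes. The goal is to show that for every $p \in \Pc(S_1, O_2)$ the point $R_1 = I(S_1, O_2; Y_1|Q) - I(S_1, O_2; Y_e|Q)$, $R_2 = 0$ is admissible, so that the achievable sum rate equals this expression; the converse will follow from Theorem~\ref{thm:RO2}.

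With only $S_1, O_2$ present, the eavesdropper equality in~\eqref{eq:R_e(p)} reads $R_{S_1}^x + R_{O_2}^x = I(S_1, O_2; Y_e|Q)$, while the pertinent receiver-$1$ constraints from $\Rc_1(p)$ are $R_{S_1} + R_{S_1}^x \leq I(S_1; Y_1|O_2, Q)$, $R_{O_2}^x \leq I(O_2; Y_1|S_1, Q)$, and the joint bound $R_{S_1} + R_{S_1}^x + R_{O_2}^x \leq I(S_1, O_2; Y_1|Q)$. Substituting the equivocation equality into the joint bound caps $R_{S_1}$ at $I(S_1, O_2; Y_1|Q) - I(S_1, O_2; Y_e|Q)$, the target. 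I would then allocate $R_{O_2}^x = I(O_2; Y_1|S_1, Q)$ and $R_{S_1}^x = I(S_1, O_2; Y_e|Q) - I(O_2; Y_1|S_1, Q)$ and verify the remaining constraints; this simultaneous feasibility check is the heart of the matter and is exactly where the first line of~\eqref{eq:Speceq3}, i.e.\ $I(O_2; Y_e|Q) \leq I(O_2; Y_1|S_1, Q) \leq I(O_2; Y_e|S_1, Q)$ under the identification $V_1 = S_1$, $V_2 = O_2$, is used. The right inequality delivers both $R_{O_2}^x \leq I(O_2; Y_e|S_1, Q)$ and $R_{S_1}^x \geq 0$; the left inequality delivers $R_{S_1}^x \leq I(S_1; Y_e|O_2, Q)$; and the individual constraint $R_{S_1} + R_{S_1}^x \leq I(S_1; Y_1|O_2, Q)$ reduces, after substitution, to $I(O_2; Y_1|Q) \leq I(O_2; Y_1|S_1, Q)$, which is automatic from the conditional independence of $S_1$ and $O_2$ given $Q$. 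With these in hand, $R_{S_1}$ may be set equal to the target so that $R_1$ attains the claimed value. I expect this feasibility verification to be the main obstacle, since it is the only place the two-sided bound in the first line of~\eqref{eq:Speceq3} is needed.

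For the converse I would invoke Theorem~\ref{thm:RO2}. Its hypothesis~\eqref{eq:ProofOutCond1}, $I(V_2; Y_2|V_1) \leq I(V_2; Y_1|V_1)$ for all product inputs, is recovered from the remaining inequality $I(V_2; Y_2|V_1, Q) \leq I(V_2; Y_1|V_1, Q)$ of~\eqref{eq:Speceq3} by taking $Q$ constant. Theorem~\ref{thm:RO2} then yields $R_1 + R_2 \leq \max_{p \in \Pc_{O}} I(V_1, V_2; Y_1|U) - I(V_1, V_2; Y_e|U)$, and because $\Pc_{O}$ and its objective agree termwise with $\Pc(S_1, O_2)$ and the claimed expression under $(U, V_1, V_2) \mapsto (Q, S_1, O_2)$, this is precisely the desired upper bound. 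Matching the two completes the sum-capacity characterization.

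Finally, the assertion that $O_2$ is superfluous follows from the chain-rule split $I(S_1, O_2; Y_1|Q) - I(S_1, O_2; Y_e|Q) = [I(S_1; Y_1|Q) - I(S_1; Y_e|Q)] + [I(O_2; Y_1|S_1, Q) - I(O_2; Y_e|S_1, Q)]$, whose second bracket is nonpositive by $I(O_2; Y_1|S_1, Q) \leq I(O_2; Y_e|S_1, Q)$ from~\eqref{eq:Speceq3}; hence the maximum is attained with $O_2$ degenerate, so that one may set $|\Oc_2| = 1$.
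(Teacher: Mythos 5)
Your proposal is correct and follows essentially the same route as the paper: the identical rate assignment $R_{O_2}^x = I(O_2;Y_1|S_1,Q)$, $R_{S_1}^x = I(S_1,O_2;Y_e|Q)-I(O_2;Y_1|S_1,Q)$, $R_{S_1} = I(S_1,O_2;Y_1|Q)-I(S_1,O_2;Y_e|Q)$ for achievability, and the converse via Theorem~\ref{thm:RO2} using the second line of \eqref{eq:Speceq3}. Your explicit feasibility verification and the chain-rule argument for why $|\Oc_2|=1$ suffices are details the paper leaves implicit, but they confirm rather than alter its proof.
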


\begin{proof}
Refer to Appendix~\ref{sec:ProofC3}.
\end{proof}

Another case for which the cooperative encoding
approach can attain the sum-capacity is the following.
\begin{corollary}\label{thm:C4}
If the IC-E satisfies
\begin{eqnarray}\label{eq:Speceq4}
I(V_2;Y_1|Q)  \leq  I(V_2;Y_e|V_1,Q)  \leq  I(V_2;Y_1|V_1,Q) \nonumber\\
I(V_2;Y_2|V_1,Q)  \leq  I(V_2;Y_1|V_1,Q)
\end{eqnarray}
for all input distributions that factors as
$p(q)p(v_1|q)p(v_2|q)p(x_1|v_1)p(x_2|v_2)$,
then its secrecy sum capacity is given as follows.
\begin{eqnarray}
\max\limits_{(R_1,R_2)\in\Cc^{\textrm{IC-E}}} R_1+R_2 
=\max\limits_{p\in\Pc(S_1,O_2)} 
I(S_1,O_2;Y_1|Q)-I(S_1,O_2;Y_e|Q).\nonumber
\end{eqnarray}
\end{corollary}

\begin{proof}
Refer to Appendix~\ref{sec:ProofC4}.
\end{proof}

Now, we use our results on the IC-E to shed more light on
the secrecy capacity of the discrete memoryless multiple
access channel. In particular, it is easy to see that the
multiple access channel with an eavesdropper (MAC-E) defined
by $p(y_1,y_e|x_1,x_2)$ is equivalent to the IC-E defined
by $p(y_1,y_2,y_e|x_1,x_2)=p(y_1,y_e|x_1,x_2) \delta(y_2-y_1)$.
This allows for specializing the results obtained earlier to the MAC-E.

\begin{corollary}
$$\Rc^{\textrm{MAC-E}}\triangleq \textrm{ the closure of}
\left\{\bigcup\limits_{p\in\Pc(C_1,C_2)} \Rc(p)\right\},
$$
where the channel is given by
$p(y_1,y_e|x_1,x_2)\delta(y_2-y_1)$.
\end{corollary}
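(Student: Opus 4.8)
**The plan is to specialize the IC-E achievability region from Theorem~\ref{thm:R} to the degenerate channel $p(y_1,y_2,y_e|x_1,x_2)=p(y_1,y_e|x_1,x_2)\delta(y_2-y_1)$.** The key observation that drives everything is that setting $Y_2=Y_1$ collapses the two legitimate decoders into a single receiver that must decode \emph{both} users' messages. In the general IC-E scheme, the auxiliary variables $S_k$ and $O_k$ were introduced precisely to distinguish signals intended for receiver $k$ from signals intended only for the non-intended receiver. Once $Y_1=Y_2$, this distinction disappears: every signal that must be decoded somewhere must be decoded at the common receiver. First I would argue that, in this setting, the ``self'' variables $S_1,S_2$ and the ``other'' variables $O_1,O_2$ become redundant and can be folded into the common variables $C_1,C_2$. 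Concretely, I would set $|\Sc_1|=|\Sc_2|=|\Oc_1|=|\Oc_2|=1$ (so their rates and randomization rates vanish) and carry only $C_1,C_2$, giving the claimed distribution class $\Pc(C_1,C_2)$.

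With this reduction in hand, the next step is to check that the decoding constraints $\Rc_1(p)$ and $\Rc_2(p)$ both collapse to the same set of inequalities. Under $Y_1=Y_2$, the constraint $R_{\Sc}+R_{\Sc}^x\leq I(\Sc;Y_1|\Sc^c,Q)$ from $\Rc_1(p)$ and its index-reversed counterpart from $\Rc_2(p)$ become identical conditions on the common receiver. After zeroing out $S_k,O_k$, the relevant subsets $\Sc$ range over subsets of $\{C_1,C_2\}$, and the surviving constraints are exactly the multiple-access-type conditions $R_{C_1}+R_{C_1}^x\leq I(C_1;Y_1|C_2,Q)$, $R_{C_2}+R_{C_2}^x\leq I(C_2;Y_1|C_1,Q)$, and $R_{C_1}+R_{C_1}^x+R_{C_2}+R_{C_2}^x\leq I(C_1,C_2;Y_1|Q)$ that one expects for a MAC with binning. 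Simultaneously, the equivocation constraints $\Rc_e(p)$ in~\eqref{eq:R_e(p)} restrict to subsets of $\{C_1,C_2\}$, yielding the standard cooperative-binning secrecy conditions at the eavesdropper. I would then simply define $\Rc(p)$ for the MAC-E as the projection onto $(R_1,R_2)=(R_{C_1},R_{C_2})$ of the tuples satisfying these reduced decoding and equivocation constraints, so that the corollary's $\Rc^{\textrm{MAC-E}}$ is literally the image of $\Rc^{\textrm{IC-E}}$ under the substitution $p(y_1,y_2,y_e|x_1,x_2)=p(y_1,y_e|x_1,x_2)\delta(y_2-y_1)$.

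Because $\Rc^{\textrm{IC-E}}\subset\mathbb{C}^{\textrm{IC-E}}$ by Theorem~\ref{thm:R} and the MAC-E is \emph{equivalent} to this particular IC-E (the equivalence being an identity of channels, hence of codebooks, error probabilities, and equivocation), achievability for the MAC-E follows immediately: any secret codebook achieving a rate pair for the IC-E with $Y_2\equiv Y_1$ is verbatim a secret codebook for the MAC-E, and vice versa. Thus the only real content is the algebraic verification that the reduced region coincides with $\Rc(p)$ as defined for $\Pc(C_1,C_2)$, together with a justification that discarding $S_k,O_k$ loses no achievable rate. The \textbf{main obstacle} is this last optimality-of-reduction point: I must argue that introducing $S_1,S_2,O_1,O_2$ cannot enlarge the region when $Y_1=Y_2$. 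The cleanest way is to note that, at the common receiver, the mutual-information constraints treat $C_k,S_k$ symmetrically (both must be decoded), so merging $S_k$ into $C_k$ via the channel prefix $p(x_k|c_k)$ only relaxes the constraints; and since $O_k$ must now also be decoded at the same receiver while adding only randomization rate $R_{O_k}^x$ against the eavesdropper, its contribution is subsumed by enlarging the randomization rate of $C_k$. Making this merging argument precise — i.e., exhibiting for every $p\in\Pc$ a matching $p'\in\Pc(C_1,C_2)$ with at least as large a rate region — is where the care is needed, though it is a routine relabeling once the collapse $Y_1=Y_2$ is invoked.
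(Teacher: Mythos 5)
Your core argument --- identify the MAC-E with the IC-E given by $p(y_1,y_e|x_1,x_2)\delta(y_2-y_1)$, observe that any secret codebook for this IC-E is verbatim a secret codebook for the MAC-E, and invoke Theorem~\ref{thm:R} --- is exactly the paper's (implicit) proof; the paper supplies no appendix proof for this corollary because this is all that is required. The corollary only asserts that the region $\Rc^{\textrm{MAC-E}}$, defined by restricting the union in Theorem~\ref{thm:R} to $p\in\Pc(C_1,C_2)$, is achievable for the MAC-E. Since every $p\in\Pc(C_1,C_2)$ is a member of $\Pc$ in which $S_1,S_2,O_1,O_2$ are constants (i.e., $|\Sc_1|=|\Sc_2|=|\Oc_1|=|\Oc_2|=1$), achievability of $\Rc(p)$ for each such $p$ is immediate, and the corollary follows.

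Where you go astray is in declaring the ``main obstacle'' to be a justification that discarding $S_k,O_k$ loses no achievable rate. No such claim is made or needed: the corollary defines an inner bound over $\Pc(C_1,C_2)$; it does not assert that this region equals the full specialization of $\Rc^{\textrm{IC-E}}$ (union over all of $\Pc$) to the degenerate channel. Moreover, your sketch of that reduction is not the ``routine relabeling'' you suggest. In $\Rc_1(p)$ the decoder conditions only on variables in $\{C_1,S_1,C_2,O_2\}$ and treats $O_1,S_2$ as noise (symmetrically for $\Rc_2(p)$), so even when $Y_1=Y_2$ the constraints do not treat the auxiliaries symmetrically; showing that merging $(C_k,S_k,O_k)$ into a single variable only enlarges the region requires a chain-rule argument together with the fact that conditioning on independent auxiliaries cannot decrease mutual information, plus bookkeeping of the equality constraint in \eqref{eq:R_e(p)}. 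As written, your proposal hinges its completion on an unproven claim that the statement itself never requires.
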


Furthermore, the following result characterizes the
secrecy sum rate of the weak MAC-E.

\begin{corollary}[MAC-E with a weak eavesdropper]\label{thm:C6}
If the eavesdropper is weak for the MAC-E, i.e.,
\begin{eqnarray}\label{eq:Speceq7}
I(V_1;Y_e|V_2) \leq I(V_1;Y_1|V_2)\nonumber\\
I(V_2;Y_e|V_1) \leq I(V_2;Y_1|V_1),
\end{eqnarray}
for all input distributions that factor as
$p(v_1)p(v_2)p(x_1|v_1)p(x_2|v_2)$, then the
secure sum-rate capacity is characterized as the following.
\begin{eqnarray}
\max\limits_{(R_1,R_2)\in\Cc^{\textrm{MAC-E}}} R_1+R_2 
= \max\limits_{p\in\Pc(C_1,C_2)} 
I(C_1,C_2;Y_1|Q) -I(C_1,C_2;Y_e|Q)\nonumber
\end{eqnarray}
\end{corollary}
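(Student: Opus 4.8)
The plan is to establish Corollary~\ref{thm:C6} by proving matching achievability and converse bounds for the secrecy sum-rate of the MAC-E. Since the MAC-E is the special case of the IC-E with $p(y_1,y_2,y_e|x_1,x_2)=p(y_1,y_e|x_1,x_2)\delta(y_2-y_1)$, i.e.\ $Y_2=Y_1$, I would derive the result by specializing the machinery already built for the IC-E rather than reproving anything from scratch.

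For \emph{achievability}, I would invoke Theorem~\ref{thm:R} (or the MAC-E corollary preceding this one) specialized to the choice $S_1=O_1=S_2=O_2=\emptyset$, using only the common random variables $C_1$ and $C_2$ with distributions in $\Pc(C_1,C_2)$. With $Y_2=Y_1$, the requirement that each user's signal be decoded at its own receiver and the ``common'' decoding requirement coincide, so both $C_1$ and $C_2$ are jointly decoded at the single legitimate receiver seeing $Y_1$. I expect the decoding constraints of $\Rc_1(p)$ and $\Rc_2(p)$ to collapse into the standard MAC joint-decoding region $R_{\Sc}+R_{\Sc}^x\le I(\Sc;Y_1|\Sc^c,Q)$ over $\Sc\subset\{C_1,C_2\}$, while the eavesdropper constraints $\Rc_e(p)$ force the total randomization rate to equal $I(C_1,C_2;Y_e|Q)$. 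Summing the two secrecy rates $R_1+R_2=R_{C_1}+R_{C_2}$ and eliminating the randomization rates $R_{C_1}^x,R_{C_2}^x$ via Fourier--Motzkin, the binding sum constraint $R_{C_1}+R_{C_2}+R_{C_1}^x+R_{C_2}^x\le I(C_1,C_2;Y_1|Q)$ combined with $R_{C_1}^x+R_{C_2}^x=I(C_1,C_2;Y_e|Q)$ yields precisely $R_1+R_2\le I(C_1,C_2;Y_1|Q)-I(C_1,C_2;Y_e|Q)$. Here the weak-eavesdropper hypothesis~\eqref{eq:Speceq7} guarantees this quantity is nonnegative and that the individual eavesdropper inequalities in $\Rc_e(p)$ are not more restrictive than the equality, so the full region is attainable.

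For the \emph{converse}, I would specialize Theorem~\ref{thm:RO2}. Setting $Y_2=Y_1$ makes its hypothesis $I(V_2;Y_2|V_1)\le I(V_2;Y_1|V_1)$ hold trivially (with equality), so the sum-rate bound $R_1+R_2\le\max_{p\in\Pc_O} I(V_1,V_2;Y_1|U)-I(V_1,V_2;Y_e|U)$ applies unconditionally to the MAC-E. The remaining task is to show this outer expression, maximized over the outer-bound distribution class $\Pc_O$ with its Markov structure $p(u)p(v_1|u)p(v_2|u)p(x_1|v_1)p(x_2|v_2)$, does not exceed the achievable expression maximized over $\Pc(C_1,C_2)$. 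I would identify $(U,V_1,V_2)$ with $(Q,C_1,C_2)$; the factorizations match after noting that $x_k$ depending only on $v_k$ corresponds to the degenerate channel prefix $X_k=C_k$, so the two optimization domains coincide and the bounds meet.

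The \textbf{main obstacle} I anticipate is the careful verification that the weak-eavesdropper condition~\eqref{eq:Speceq7} is exactly what is needed to make the eavesdropper-confusion constraints in $\Rc_e(p)$ non-binding in the achievability direction---specifically, checking that the single-user inequalities $R_{C_k}^x\le I(C_k;Y_e|C_{k^c},Q)$ together with the sum equality admit a feasible split of randomization rates that simultaneously satisfies the legitimate-receiver decoding constraints. This is a feasibility check on a small linear system, and I would handle it by exhibiting an explicit rate allocation; the weak condition ensures enough ``room'' at the legitimate receiver to absorb the required randomization. The converse side is comparatively routine once the distribution classes are matched, since it is a direct specialization of an already-proven theorem.
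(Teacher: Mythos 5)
Your proposal is correct and takes essentially the same route as the paper: achievability by specializing Theorem~\ref{thm:R} to $\Pc(C_1,C_2)$ and exhibiting a feasible split of the randomization rates under the weak-eavesdropper condition (the paper's explicit choice is $I(C_1;Y_e|Q)\leq R_{C_1}^x\leq I(C_1;Y_e|C_2,Q)$ with $R_{C_2}^x=I(C_1,C_2;Y_e|Q)-R_{C_1}^x$), and the converse by applying Theorem~\ref{thm:RO2} to the IC-E defined by $p(y_1,y_e|x_1,x_2)\delta(y_2-y_1)$, whose hypothesis holds with equality since $Y_2=Y_1$. The only slip is immaterial: $\Pc_O$ and $\Pc(C_1,C_2)$ coincide simply by renaming $(U,V_1,V_2)$ as $(Q,C_1,C_2)$, since both classes already allow general prefixes $p(x_k|v_k)$, so no appeal to a degenerate prefix $X_k=C_k$ is needed.
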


\begin{proof}
Refer to Appendix~\ref{sec:ProofC6}.
\end{proof}

Another special case of our model is the relay-eavesdropper
channel with a deaf helper. In this scenario, transmitter $1$ has
a secret message for receiver $1$ and transmitter $2$
is only interested in helping transmitter
$1$ in increasing its secure transmission rates. Here,
the random variable $O_2$ at transmitter $2$ is
utilized to add randomness to the network. Again, the
regions given earlier
can be specialized to this scenario. For example,
the following region is achievable
for this relay-eavesdropper model.

\begin{eqnarray}
\Rc^{\textrm{RE}}&\triangleq&\textrm{ the closure of the convex hull of }
\left\{\bigcup\limits_{p\in\Pc(S_1,O_2,|\Qc|=1)} \Rc(p)\right\},
\end{eqnarray}
where $\Pc(S_1,O_2,|\Qc|=1)$ denotes the probability distributions
in $\Pc(S_1,O_2)$ with a deterministic $Q$.

For this relay-eavesdropper scenario, the noise forwarding (NF) scheme
proposed in~\cite{Lai:The08} achieves the following rate.
\begin{equation}\label{eq:R_NF}
R^{\textrm{[NF]}}=\max\limits_{p\in\Pc(S_1,O_2,|\Qc|=1)}
\: R_1(p) ,
\end{equation}
where

$R_1(p)\triangleq [I(S_1;Y_1|O_2)+\min\{I(O_2;Y_1),I(O_2;Y_e|S_1)\}$
$-\min\{I(O_2;Y_1),I(O_2;Y_e)\}-I(S_1;Y_e|O_2)]^+$.

The following result shows that NF is a special
case of the cooperative encoding scheme and provides a simplification
of the achievable secrecy rate.

\begin{corollary}\label{thm:C7}
$(R^{\textrm{[NF]}},0)\in\Rc^{\textrm{RE}}$, where
$R^{\textrm{[NF]}}$ can be simplified as follows.
\begin{eqnarray}
R^{\textrm{[NF]}}=
\max\limits_{p\in\Pc(S_1,O_2,|\Qc|=1) \textrm{ s.t. }
I(O_2;Y_e)\leq I(O_2;Y_1)} I(S_1;Y_1|O_2) 
+\min\{I(O_2;Y_1),I(O_2;Y_e|S_1)\}-I(S_1,O_2;Y_e).
\end{eqnarray}
\end{corollary}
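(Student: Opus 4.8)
The plan is to prove the two assertions of the corollary in turn: that the cooperative region assigns to user~$1$ (with $R_2=0$) a maximal secrecy rate equal to the claimed simplified expression, and that this expression equals $R^{\textrm{[NF]}}$ after the stated restriction of the feasible set.

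First I would specialise $\Rc(p)$ to the relay--eavesdropper structure. Taking $p\in\Pc(S_1,O_2,|\Qc|=1)$ and letting $C_1,O_1,C_2,S_2$ be constants gives $R_1=R_{S_1}$, $R_2=0$, and makes $\Rc_2(p)$ vacuous. Writing $A=I(S_1;Y_1|O_2)$, $B=I(O_2;Y_1)$, $F=I(O_2;Y_1|S_1)$, $C=I(O_2;Y_e|S_1)$, $D=I(O_2;Y_e)$, $E=I(S_1;Y_e|O_2)$ (with $Q$ deterministic), the surviving constraints from $\Rc_1(p)$ are $R_{S_1}+R_{S_1}^x\le A$, $R_{O_2}^x\le F$, $R_{S_1}+R_{S_1}^x+R_{O_2}^x\le A+B$, and those from $\Rc_e(p)$ are $R_{S_1}^x\le E$, $R_{O_2}^x\le C$, $R_{S_1}^x+R_{O_2}^x=D+E$. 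Because $Q$ is deterministic, $S_1\perp O_2$, and three chain-rule identities plus nonnegativity of a conditional mutual information give $F\ge B$, $C\ge D$, and $I(S_1,O_2;Y_e)=D+E\ge C$. I would then eliminate $R_{S_1}^x,R_{O_2}^x$ by Fourier--Motzkin: substituting $R_{S_1}^x=D+E-R_{O_2}^x$ from the equality and pushing $R_{O_2}^x$ up to $\min\{F,C\}$ (feasible since $D\le\min\{F,C\}$ whenever $D\le B$), the projection onto $R_{S_1}$ collapses, using $F\ge B$, to $R_{S_1}\le A+\min\{B,C\}-(D+E)$. Recognising $D+E=I(S_1,O_2;Y_e)$, this is exactly the simplified integrand $G(p)$, so the cooperative scheme achieves $G(p)$ for every feasible $p$; hence once the simplification below is established, $R^{\textrm{[NF]}}=\max_p G(p)$ equals the largest $R_1$ in $\Rc^{\textrm{RE}}$ with $R_2=0$, giving $(R^{\textrm{[NF]}},0)\in\Rc^{\textrm{RE}}$.

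Second I would prove the simplification by evaluating $R_1(p)$ on the two parts of the feasible set. On $\{D\le B\}$ the term $\min\{B,D\}$ equals $D$, and the chain rule $I(S_1,O_2;Y_e)=D+E$ turns $R_1(p)$ into $[\,G(p)\,]^+$. On $\{D>B\}$, the inequality $C\ge D>B$ forces both $\min\{B,C\}$ and $\min\{B,D\}$ to equal $B$, so $R_1(p)=[A-E]^+=[I(S_1;Y_1|O_2)-I(S_1;Y_e|O_2)]^+$; I would dominate any such $p$ by a point of $\{D\le B\}$ obtained by replacing $O_2$ with its single best symbol $o_2^{\ast}$, keeping $p(x_1|s_1)$ and the law $p(x_2|o_2^{\ast})$. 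This $p'$ has $I(O_2;Y_e)=I(O_2;Y_1)=0$, and since $A-E=\sum_{o_2}p(o_2)[I(S_1;Y_1|O_2=o_2)-I(S_1;Y_e|O_2=o_2)]$, the ``max exceeds average'' choice of $o_2^{\ast}$ makes $R_1(p')\ge R_1(p)$. Thus the maximum is attained on $\{D\le B\}$; finally, the all-constant choice lies in $\{D\le B\}$ and gives value $0$, which lets me drop the outer $[\,\cdot\,]^+$ and conclude $R^{\textrm{[NF]}}=\max_{p:\,D\le B}G(p)$.

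The main obstacle is the regime $\{D>B\}$ in the simplification: there the cooperative rate for a fixed $p$ is strictly below $R_1(p)$, so the reduction to $\{D\le B\}$ cannot be argued pointwise and must instead be obtained by the single-symbol substitution, which trades the averaged conditional secrecy difference for its best realisation. The second delicate point is keeping the directions of $F\ge B$, $C\ge D$ and $D+E\ge C$ correct---all three rest on $I(S_1;O_2)=0$---since reversing any of them would spuriously make $R^{\textrm{[NF]}}$ appear to exceed the cooperative region.
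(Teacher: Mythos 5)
Your proof is correct and takes essentially the same route as the paper: the paper justifies restricting to distributions with $I(O_2;Y_e)\le I(O_2;Y_1)$ by the same deterministic-$O_2$ (best single symbol) domination argument, and then exhibits exactly the rate assignment $R_{O_2}^x=\min\{I(O_2;Y_1),I(O_2;Y_e|S_1)\}$, $R_{S_1}^x=I(S_1,O_2;Y_e)-R_{O_2}^x$, $R_{S_1}=I(S_1;Y_1|O_2)+R_{O_2}^x-I(S_1,O_2;Y_e)$ that your Fourier--Motzkin projection identifies as optimal in $\Rc(p)$. The only cosmetic differences are that the paper states the assignment directly rather than deriving it by elimination, and that your explicit step for dropping the outer $[\cdot]^+$ (via the all-constant input) is left implicit in the paper.
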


\begin{proof}
Refer to Appendix~\ref{sec:ProofC7}.
\end{proof}

Finally, the next result establishes the optimality
of NF in certain relay-eavesdropper channels.

\begin{corollary}\label{thm:C8}
Noise Forwarding scheme is optimal for the relay-eavesdropper
channels which satisfy
\begin{eqnarray}\label{eq:Speceq8}
I(V_2;Y_1) \leq I(V_2;Y_e|V_1),
\end{eqnarray}
for all input distributions that factor as
$p(v_1)p(v_2)p(x_1|v_1)p(x_2|v_2)$,
and the corresponding secrecy capacity is
\begin{eqnarray}
\Cc^{\textrm{RE}} =
\max\limits_{
\footnotesize
\begin{array}{c}
  p\in\Pc(S_1,O_2,|\Qc|=1) \\
  \textrm{s.t. } I(O_2;Y_e)\leq I(O_2;Y_1)
\end{array}}
\normalsize
I(S_1,O_2;Y_1)-I(S_1,O_2;Y_e).\nonumber
\end{eqnarray}
\end{corollary}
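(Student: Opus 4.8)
The plan is to match an achievability bound coming from the noise‑forwarding simplification of Corollary~\ref{thm:C7} with a converse coming from the sum‑rate outer bound of Theorem~\ref{thm:RO2}, using the hypothesis~\eqref{eq:Speceq8} to close the gap. For achievability I would invoke Corollary~\ref{thm:C7}, which puts $(R^{\textrm{[NF]}},0)$ in $\Rc^{\textrm{RE}}$ with $R^{\textrm{[NF]}}$ the maximum of $I(S_1;Y_1|O_2)+\min\{I(O_2;Y_1),I(O_2;Y_e|S_1)\}-I(S_1,O_2;Y_e)$ subject to $I(O_2;Y_e)\le I(O_2;Y_1)$. Reading~\eqref{eq:Speceq8} with the admissible identification $V_1=S_1$, $V_2=O_2$ (the product form of $\Pc(S_1,O_2,|\Qc|=1)$ is exactly the one assumed in~\eqref{eq:Speceq8}) gives $I(O_2;Y_1)\le I(O_2;Y_e|S_1)$, so the minimum equals $I(O_2;Y_1)$ and the chain rule $I(S_1;Y_1|O_2)+I(O_2;Y_1)=I(S_1,O_2;Y_1)$ reduces $R^{\textrm{[NF]}}$ to precisely the expression in the statement, establishing that $\Cc^{\textrm{RE}}$ is at least the claimed value.

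For the converse I would view the relay‑eavesdropper channel as an IC‑E with $Y_2=Y_1$ and $R_2=0$; then the condition $I(V_2;Y_2|V_1)\le I(V_2;Y_1|V_1)$ of Theorem~\ref{thm:RO2} holds with equality, giving $R_1\le\max_{p\in\Pc_{O}}I(V_1,V_2;Y_1|U)-I(V_1,V_2;Y_e|U)$, and a standard time‑sharing/convexity argument removes $U$ to leave $R_1\le g$ with $g\triangleq\max I(V_1,V_2;Y_1)-I(V_1,V_2;Y_e)$ over product inputs $p(v_1)p(v_2)p(x_1|v_1)p(x_2|v_2)$. The decisive move is to upgrade~\eqref{eq:Speceq8}, whose left‑hand side is unconditioned, to the \emph{conditional} inequality $I(V_2;Y_1|V_1)\le I(V_2;Y_e|V_1)$, which I would obtain by applying~\eqref{eq:Speceq8} with $V_1$ frozen at each value $v_1$ (legitimate since~\eqref{eq:Speceq8} is posited for every prefix $p(x_1|v_1)$) and then averaging over $v_1$. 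Writing the integrand of $g$ as $[I(V_1;Y_1)-I(V_1;Y_e)]+[I(V_2;Y_1|V_1)-I(V_2;Y_e|V_1)]$, this makes the relay bracket nonpositive, so an optimal $V_2$ may be taken constant; a constant $O_2$ satisfies $I(O_2;Y_e)\le I(O_2;Y_1)$ with both sides zero and attains only $I(S_1;Y_1|O_2)-I(S_1;Y_e|O_2)$, a value no larger than the constrained maximum in the statement. Hence $g$ does not exceed the claimed right‑hand side, which in turn is achievable, so all three quantities coincide and $\Cc^{\textrm{RE}}$ equals the stated expression.

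The hard part will be this final reduction: justifying rigorously the freezing of $V_1$ that converts the unconditional~\eqref{eq:Speceq8} into its conditional form, and carrying the auxiliary‑variable bookkeeping in the $U$‑removal and in the identification of $V_2$ with the relay input when invoking Theorem~\ref{thm:RO2}. Once the conditional inequality is in hand, the remaining steps are elementary chain‑rule identities and nonnegativity of mutual information.
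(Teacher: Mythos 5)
Your proposal is correct and shares the paper's skeleton: achievability is Corollary~\ref{thm:C7} plus \eqref{eq:Speceq8} (which forces $\min\{I(O_2;Y_1),I(O_2;Y_e|S_1)\}=I(O_2;Y_1)$, so the chain rule collapses the NF rate to $I(S_1,O_2;Y_1)-I(S_1,O_2;Y_e)$), and the converse is Theorem~\ref{thm:RO2} applied to the IC-E $p(y_1,y_e|x_1,x_2)\delta(y_2-y_1)$, whose hypothesis holds with equality, followed by removal of $U$. Where you genuinely differ is the step showing that restricting the outer-bound maximization to distributions with $I(O_2;Y_e)\le I(O_2;Y_1)$ loses nothing. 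The paper does this \emph{without} using \eqref{eq:Speceq8}: decomposing $I(V_1,V_2;Y_1)-I(V_1,V_2;Y_e)$ with $V_2$ first, any violating distribution makes the bracket $I(V_2;Y_1)-I(V_2;Y_e)$ negative and is therefore dominated by a deterministic-$V_2$ distribution (this is the ``similar reasoning'' to Corollary~\ref{thm:C7} that it cites), so its constrained converse holds for \emph{every} relay-eavesdropper channel and \eqref{eq:Speceq8} enters only on the achievability side. You instead decompose with $V_1$ first and use \eqref{eq:Speceq8} itself, upgraded to the conditional form $I(V_2;Y_1|V_1)\le I(V_2;Y_e|V_1)$ by freezing $V_1=v_1$ (a legitimate degenerate-$V_1$ instance of \eqref{eq:Speceq8}) and averaging; this is sound, and your freezing/averaging step is spelled out more rigorously than the paper's appeal to analogy, at the cost of invoking the channel hypothesis in both directions, which makes your converse slightly less general than the paper's hypothesis-free one. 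One point to make explicit in a final write-up: when you replace the optimal $V_2$ by a constant you must retain the marginal distribution of $X_2$, so that $I(V_1;Y_1)-I(V_1;Y_e)$, which depends on $X_2$ only through that marginal, is unchanged and the resulting distribution lies in the constrained set at the same value.
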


\begin{proof}
Refer to Appendix~\ref{sec:ProofC8}.
\end{proof}

%%%%%%%%%%%%%%%%%%%%%%%%%%%%%%%%%%%%%%%%%%%%%%%%%%%%%%%%%%%%%%%%%%%%%%%%%%%%%%
%%%%%%%%%%%%%%%%%%%%%%%%%%%%%%%%%%%%%%%%%%%%%%%%%%%%%%%%%%%%%%%%%%%%%%%%%%%%%%

\section{Security for the Gaussian Interference Channel}
\label{sec:III}
\subsection{Inner Bound and Numerical Results}

In its standard form~\cite{Carleial:Interference78}, the two user
Gaussian Interference Channel with an Eavesdropper (GIC-E) is given by
\begin{eqnarray}\label{eq:GIC-E}
Y_1 & = & X_1 + \sqrt{c_{21}} X_2 + N_1 \nonumber\\
Y_2 & = & \sqrt{c_{12}} X_1 + X_2 + N_2 \\
Y_e & = & \sqrt{c_{1e}} X_1 + \sqrt{c_{2e}} X_2 + N_e, \nonumber
\end{eqnarray}
where $N_{r}\sim\Nc(0,1)$ is the noise at each receiver $r=1,2,e$
and the average power constraints
are $\frac{1}{n}\sum\limits_{t=1}^n
(X_k(t))^2\leq P_k$ for $k=1,2$. The secrecy capacity
region of the GIC-E is denoted as
$\mathbb{C}^{\textrm{GIC-E}}$.

%%%%%%%%%%%%%%%%%%%%%%%%%%%%%%%%%%%%%%%%%%%%%%%%%%%%%%%%%%%%%%%%%%%%%%%%%%%%%%

The goal here is to specialize the results obtained in the
previous section to the Gaussian scenario and illustrate the
gains that can be leveraged from the cooperative coding for randomized
codebooks and for channel prefixing, and from time sharing. For this
scenario, the Gaussian codebooks
are used and the same regions will be achievable after taking into account
the power constraint at the users.
Towards this end, we will need the following definitions.
Consider a probability mass
function on the time sharing parameter denoted by $p(q)$.
Let $\Ac(p(q))$ denote the set of all
possible power allocations, i.e.,

$\Ac(p(q)) \triangleq \bigg\{\big(P_1^c(q),
P_1^s(q),P_1^o(q),P_1^j(q),P_2^c(q),P_2^s(q),$
$P_2^o(q),P_2^j(q)\big)\big|
\sum\limits_{q\in\Qc}
(P_k^c(q)+P_k^s(q)+P_k^o(q)+P_k^j(q))p(q) \leq P_k,
\textrm{ for } k=1,2. \bigg\}$

Now, we define a set of joint distributions
$\Pc_G$ for the Gaussian case as follows.

\begin{eqnarray}
\Pc_G \triangleq  \bigg\{p &|& p\in\Pc,
(P_1^c(q),P_1^s(q),P_1^o(q),P_1^j(q),P_2^c(q),P_2^s(q),P_2^o(q),P_2^j(q))\in\Ac(p(q)), \nonumber\\
&&C_1(q)\sim\Nc(0,P_1^c(q)), S_1(q)\sim\Nc(0,P_1^s(q)), O_1(q)\sim\Nc(0,P_1^o(q)), J_1(q)\sim\Nc(0,P_1^j(q)), \nonumber\\
&&C_2(q)\sim\Nc(0,P_2^c(q)), S_2(q)\sim\Nc(0,P_2^s(q)), O_2(q)\sim\Nc(0,P_2^o(q)), J_2(q)\sim\Nc(0,P_2^j(q)), \nonumber\\
&&X_1(q)=C_1(q)+S_1(q)+O_1(q)+J_1(q), X_2(q)=C_2(q)+S_2(q)+O_2(q)+J_2(q) \bigg\}, \nonumber
\end{eqnarray}
where the Gaussian model given in (\ref{eq:GIC-E})
gives $p(y_1,y_2,y_e|x_1,x_2)$. Using this set of distributions,
we obtain the following achievable secrecy rate region for the GIC-E.

\begin{corollary}\label{thm:RG}

$\Rc^{\textrm{GIC-E}}
\triangleq
\textrm{ the closure of }
\left\{
\bigcup\limits_{p\in \Pc_G} \Rc(p)
\right\}
\subset  \mathbb{C}^{\textrm{GIC-E}}$.

\end{corollary}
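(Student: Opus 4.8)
The plan is to obtain this corollary as a direct specialization of Theorem~\ref{thm:R} to jointly Gaussian inputs, so that the essential work reduces to two points: (i) verifying that every distribution in $\Pc_G$ is a legitimate member of the input family $\Pc$ used in Theorem~\ref{thm:R}, and (ii) bridging the gap between the finite-alphabet, unconstrained setting in which Theorem~\ref{thm:R} was established and the continuous-alphabet, power-constrained Gaussian model of~\eqref{eq:GIC-E}. Once both points are settled, the achievability of $\Rc(p)$ for each fixed $p\in\Pc_G$ follows from Theorem~\ref{thm:R}, and taking the closure of the union over $p\in\Pc_G$ yields the claimed inclusion.

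First I would check the factorization. For a fixed time-sharing symbol $Q=q$, the variables $C_1,S_1,O_1,C_2,S_2,O_2,J_1,J_2$ are chosen mutually independent, which immediately supplies the conditional-independence structure $p(c_1|q)p(s_1|q)p(o_1|q)p(c_2|q)p(s_2|q)p(o_2|q)$ demanded by $\Pc$. The deterministic relations $X_1=C_1+S_1+O_1+J_1$ and $X_2=C_2+S_2+O_2+J_2$, together with the independence of $J_1$ (resp.\ $J_2$) from the remaining variables, realize the channel prefixes as $p(x_1|c_1,s_1,o_1)=\Nc(c_1+s_1+o_1,P_1^j(q))$ and $p(x_2|c_2,s_2,o_2)=\Nc(c_2+s_2+o_2,P_2^j(q))$. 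Thus the jamming signals are exactly the cooperative channel-prefixing randomness of Theorem~\ref{thm:R}, and each $p\in\Pc_G$ factors precisely as required by $\Pc$; the only genuinely new ingredient is that the auxiliary alphabets are now $\RR$ rather than finite sets.

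Next I would extend the achievability of Theorem~\ref{thm:R} to this continuous setting via a quantization argument: fix $p\in\Pc_G$, replace the real-valued auxiliaries and channel outputs by finely quantized versions supported on finite grids, apply Theorem~\ref{thm:R} to the resulting discrete IC-E, and let the quantization step tend to zero. Since all variables are jointly Gaussian given $Q$, every mutual-information term in $\Rc_1(p)$, $\Rc_2(p)$ and $\Rc_e(p)$ is finite and continuous under refinement, so the discrete rate regions converge to $\Rc(p)$ evaluated with differential entropies. The average power constraint is handled by the time-sharing power-allocation set $\Ac(p(q))$: the per-symbol powers $P_k^c(q),P_k^s(q),P_k^o(q),P_k^j(q)$ are budgeted so that the long-run average transmit power of user $k$ does not exceed $P_k$, and codewords in the Gaussian ensemble that violate the constraint are discarded, an event of vanishing probability that affects neither the achievable rates nor the equivocation analysis.

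The main obstacle I anticipate is the power-constraint bookkeeping combined with the quantization limit, rather than any new information-theoretic idea: one must keep the quantized inputs inside a bounded region (so that Theorem~\ref{thm:R} applies and the power truncation stays controlled) while simultaneously refining the grid, and one must verify that the equivocation guarantee of~\eqref{eq:secrecy}, proved for finite alphabets, survives both limits. This step is routine but delicate, and is the only place where the argument goes beyond a literal reindexing of Theorem~\ref{thm:R}. With these limits in hand, the inclusion $\bigcup_{p\in\Pc_G}\Rc(p)\subset\mathbb{C}^{\textrm{GIC-E}}$, and hence its closure, follows.
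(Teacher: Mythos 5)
Your proposal is correct and follows essentially the same route as the paper: the paper gives no separate proof of Corollary~\ref{thm:RG}, treating it as an immediate specialization of Theorem~\ref{thm:R} to the jointly Gaussian inputs in $\Pc_G$ with the superposition prefix $X_k=C_k+S_k+O_k+J_k$ (which is exactly how the jamming signals $J_k$ realize channel prefixing). Your quantization and power-constraint bookkeeping simply makes explicit the standard discrete-to-continuous passage that the paper leaves implicit.
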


It is interesting to see that our particular choice of the
channel prefixing distribution $p(x_k|c_k,s_k,o_k)$ in the
above corollary corresponds to a superposition coding
approach where $X_k=C_k+S_k+O_k+J_k$. This observation
establishes the fact that noise injection scheme
of~\cite{Negi:Secret05} and jamming scheme of~\cite{Tekin:The08}
are {\bf special cases} of the channel prefixing
technique of~\cite{Csiszar:Broadcast78}.

The following computationally simple subregion
will be used to generate some of our numerical results.

\begin{corollary}\label{thm:RG3}
$\Rc_2^{\textrm{GIC-E}}
\subset \Rc^{\textrm{GIC-E}} \subset  \mathbb{C}^{\textrm{GIC-E}}$,
where
$\Rc_2^{\textrm{GIC-E}}
\triangleq
\textrm{ the convex closure of }
\big\{
\bigcup\limits_{p\in\Pc_{G2}} \Rc(p)
\big\},$
and
$\Pc_{G2}\triangleq \{p \: | \: p\in\Pc_G,
|\Qc|=1,P_1^s(q)=P_1^o(q)=P_2^s(q)=P_2^o(q)=0
\textrm{ for any } Q=q \}.$
\end{corollary}

Another simplification can be obtained from the following
TDMA-like approach. Here we divide the $n$ channel uses
into two parts of lengths represented by $\alpha n$ and
$(1-\alpha)n$, where $0\leq \alpha \leq 1$ and $\alpha n$
is assumed to be an integer. During the first period, transmitter
$1$ generates randomized codewords using power $P_1^s(1)$ and
transmitter $2$ jams the channel using power $P_2^j(1)$. For the
second period, the roles of the users are reversed, where the
users use powers $P_2^s(2)$ and $P_1^j(2)$. We refer to
this scheme cooperative TDMA (C-TDMA) which achieves
the following region.

\begin{corollary}\label{thm:RC-TDMA}
$\Rc_{\textrm{C-TDMA}} \subset \Rc^{\textrm{GIC-E}} \subset
\mathbb{C}^{\textrm{GIC-E}}$, where
$$\Rc_{\textrm{C-TDMA}}\triangleq \textrm{ the closure of }
\left\{
\bigcup\limits_{{}^{\quad\quad\quad \alpha  \in  [0,1]}_{
{}^{\alpha P_1^s(1) + (1-\alpha)P_1^j(2) \leq P_1}_
{\alpha P_2^j(1) + (1-\alpha)P_2^s(2) \leq P_2}
}}  (R_1,R_2)
\right\},$$ where

\begin{eqnarray}
R_1=\frac{\alpha}{2} \Bigg[
\log\left( 1+\frac{P_1^s(1)}{1+c_{21}P_2^{j}(1)} \right)-
\quad \log\left( 1+\frac{c_{1e}P_1^s(1)}{1+c_{2e}P_2^{j}(1)} \right)
\Bigg]^+,
\end{eqnarray}
and
\begin{eqnarray}
R_2=\frac{(1-\alpha)}{2} \Bigg[
\log\left( 1+\frac{P_2^s(2)}{1+c_{12}P_1^{j}(2)} \right)-
\log\left( 1+\frac{c_{2e}P_2^s(2)}{1+c_{1e}P_1^{j}(2)} \right)
\Bigg]^+.
\end{eqnarray}
\end{corollary}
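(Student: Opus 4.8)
The plan is to exhibit $\Rc_{\textrm{C-TDMA}}$ as a subregion of the general Gaussian inner bound $\Rc^{\textrm{GIC-E}}$ of Corollary~\ref{thm:RG} by specializing the time-sharing variable, the power allocation, and the auxiliary-variable alphabets. Since Corollary~\ref{thm:RG} already establishes $\Rc^{\textrm{GIC-E}} \subset \mathbb{C}^{\textrm{GIC-E}}$, it suffices to show $\Rc_{\textrm{C-TDMA}} \subset \Rc^{\textrm{GIC-E}}$, i.e., that every pair $(R_1,R_2)$ in the C-TDMA region coincides with a point of $\Rc(p)$ for a suitable $p\in\Pc_G$.

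First I would take the time-sharing parameter to be binary, $Q\in\{1,2\}$, with $p(Q=1)=\alpha$ and $p(Q=2)=1-\alpha$. In state $Q=1$ I allocate all of transmitter $1$'s power to its self-signal and all of transmitter $2$'s power to jamming, keeping $P_1^s(1)$ and $P_2^j(1)$ positive while setting $P_1^c(1)=P_1^o(1)=P_1^j(1)=P_2^c(1)=P_2^s(1)=P_2^o(1)=0$; in state $Q=2$ I reverse the roles, keeping only $P_2^s(2)$ and $P_1^j(2)$ positive. With this choice the two power constraints defining $\Ac(p(q))$ collapse to exactly $\alpha P_1^s(1)+(1-\alpha)P_1^j(2)\leq P_1$ and $\alpha P_2^j(1)+(1-\alpha)P_2^s(2)\leq P_2$. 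I also set the common and other secrecy rates to zero, $R_{C_1}=R_{C_2}=R_{O_1}=R_{O_2}=0$, so that $R_1=R_{S_1}$ and $R_2=R_{S_2}$.

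Next I would evaluate the constituent regions $\Rc_1(p)$, $\Rc_2(p)$, and $\Rc_e(p)$ under this degenerate choice. Because in each state exactly one user transmits a binned self-signal while the other contributes only Gaussian channel-prefixing noise (which carries no message and hence need not be decoded), every constraint in $\Rc_1(p)$ and $\Rc_2(p)$ involving a common or other signal becomes vacuous, and the single binding inequality for $S_1$ reduces to $R_{S_1}+R_{S_1}^x\leq I(S_1;Y_1|Q)$. Evaluating conditioned on $Q=1$, the jamming $J_2$ enters as additive noise at both the legitimate receiver and the eavesdropper, giving $I(S_1;Y_1|Q)=\frac{\alpha}{2}\log(1+\frac{P_1^s(1)}{1+c_{21}P_2^j(1)})$ and $I(S_1;Y_e|Q)=\frac{\alpha}{2}\log(1+\frac{c_{1e}P_1^s(1)}{1+c_{2e}P_2^j(1)})$, with symmetric expressions for $S_2$ in state $Q=2$. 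Choosing $R_{S_1}^x=I(S_1;Y_e|Q)$ and $R_{S_2}^x=I(S_2;Y_e|Q)$ (all remaining randomization rates zero) verifies the full-set equality $R_{\Sc}^x=I(\Sc;Y_e|Q)$ and the subset inequalities of $\Rc_e(p)$, after which the net secrecy rates become $R_1=[I(S_1;Y_1|Q)-I(S_1;Y_e|Q)]^+$ and $R_2=[I(S_2;Y_2|Q)-I(S_2;Y_e|Q)]^+$, which are precisely the stated C-TDMA expressions. Taking the closure over all feasible $\alpha$ and power splits then yields the claimed inclusion.

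The step I expect to require the most care is confirming that the coupled eavesdropper constraints in $\Rc_e(p)$—which in general tie together all six randomization rates through both the subset inequalities and the full-set equality—are simultaneously satisfied by the proposed single-parameter-per-state assignment. The saving feature is that the two surviving binning signals $S_1$ and $S_2$ are supported on disjoint values of $Q$ and are therefore independent, so that the relevant mutual-information terms decouple additively and the standard single-user wiretap balance $R_{S_k}^x=I(S_k;Y_e|Q)$ suffices; I would make this decoupling explicit to rule out any hidden cross-state constraint.
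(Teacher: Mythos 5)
Your proposal is correct and takes essentially the same route as the paper: the paper's proof likewise specializes Corollary~\ref{thm:RG} with a binary time-sharing variable ($p(q=1)=\alpha$, $p(q=2)=1-\alpha$), activates only $S_1$ with jamming $J_2$ in the first slot and $S_2$ with jamming $J_1$ in the second, and reads off the single-user Gaussian wiretap rates with jamming-modified noise variances. The paper states this in three sentences; your explicit verification of the $\Rc_1(p)$, $\Rc_2(p)$, and $\Rc_e(p)$ constraints, including the cross-state decoupling $I(S_1,S_2;Y_e|Q)=I(S_1;Y_e|Q)+I(S_2;Y_e|Q)$, simply fills in details the paper leaves implicit.
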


\begin{proof}
This is a subregion of the $\mathbb{C}^{\textrm{GIC-E}}$, where
we use a time sharing random variable satisfying
$p(q=1)=\alpha$ and $p(q=2)=1-\alpha$, and utilize
the random variables $S_1$ and $S_2$.
The proof also follows by respective single-user
Gaussian wiretap channel result~\cite{Leung-Yan-Cheong:The78}
with the modified noise variances due to the jamming signals.
\end{proof}

In the C-TDMA scheme above, we only add randomness by noise injection
at the {\em helper} node. However, our cooperative encoding
scheme (Corollary~\ref{thm:RG}) allows
for the implementation of more general {\em cooperation}
strategies. For example, in a more general TDMA approach, each user can help
the other via both the design of randomized codebook and channel
prefixing (i.e., the noise forwarding scheme described in
Section~\ref{sec:GaussianRelay-EavesdropperChannel}).
In addition, one can develop enhanced transmission strategies with
a time-sharing random variable of cardinality greater than $2$.

We now provide numerical results for the following
subregions of the achievable region given in Corollary~\ref{thm:RG}.

\begin{itemize}
  \item $\Rc_2^{\textrm{GIC-E}}$: Here
  we utilize both cooperative randomized codebook design and channel prefixing.
  \item $\Rc_2^{\textrm{GIC-E}}{\textrm{(rc or cp)}}$: Here we utilize
  either cooperative randomized codebook design (rc) or channel
  prefixing (cp) scheme at a transmitter, but not both.
  \item $\Rc_2^{\textrm{GIC-E}}{\textrm{(ncp)}}$: Here we only utilize
  cooperative randomized codebook design, no channel prefixing
  (ncp) is implemented.
  \item $\Rc_{\textrm{C-TDMA}}$: This region is an example of utilizing both
  time-sharing and cooperative channel prefixing.
  \item $\Rc_{\textrm{C-TDMA}}{\textrm{(ncp)}}$: This region is a subregion of $\Rc_{\textrm{C-TDMA}}$,
  for which we set the jamming powers to zero.
\end{itemize}

\begin{figure}[t] %4
    \centering
    \includegraphics[width=0.5\columnwidth]{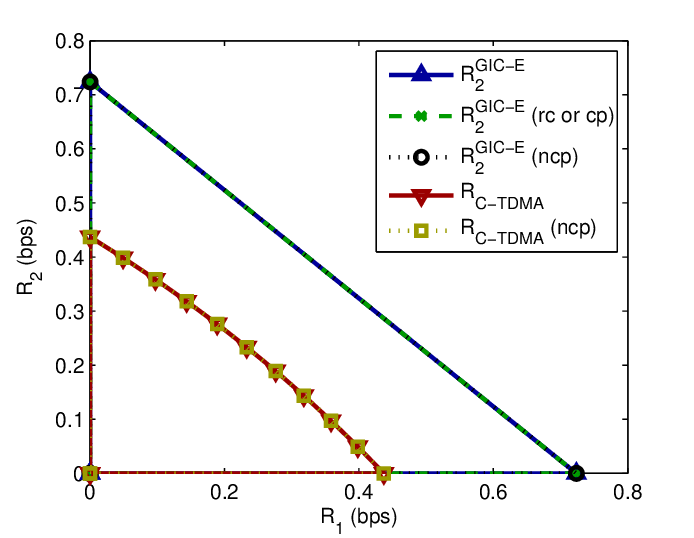}
    \caption{Numerical results for GIC-E with $c_{12}=c_{21}=1.9$,
    $c_{1e}=c_{2e}=0.5$, $P_{1}=P_{2}=10$.
    The three schemes, performance of which are given by
    $R_2^{\textrm{GIC-E}}$, $R_2^{\textrm{GIC-E}} \textrm{(rc or cp)}$, and
    $R_2^{\textrm{GIC-E}}\textrm{(ncp)}$, have the same performance and
    outperform the ones represented by
    $R_{\textrm{C-TDMA}}$ and $R_{\textrm{C-TDMA}}\textrm{(ncp)}$, which
    achieve the same region.
    }\label{fig:4}
\end{figure}

\begin{figure}[t] %5
    \centering
    \includegraphics[width=0.5\columnwidth]{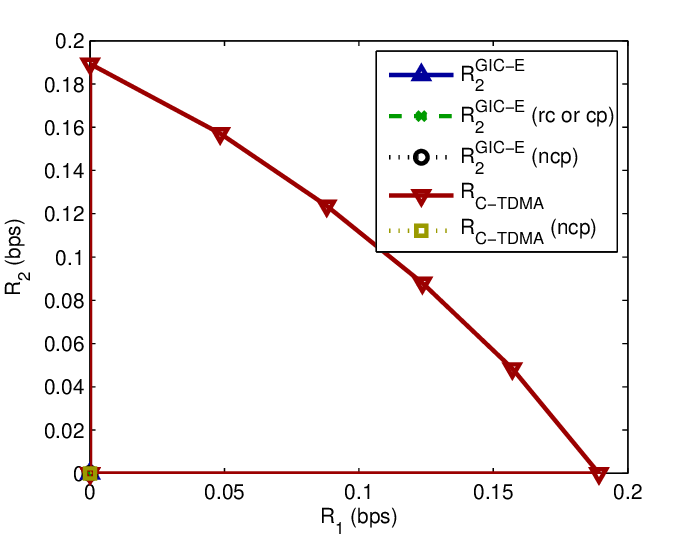}
    \caption{Numerical results for GIC-E with $c_{12}=c_{21}=0.6$,
    $c_{1e}=c_{2e}=1.1$, $P_{1}=P_{2}=10$.
    Only the scheme represented by $R_{\textrm{C-TDMA}}$ achieves positive rates.
    }\label{fig:5}
\end{figure}

\begin{figure}[t] %6
    \centering
    \includegraphics[width=0.5\columnwidth]{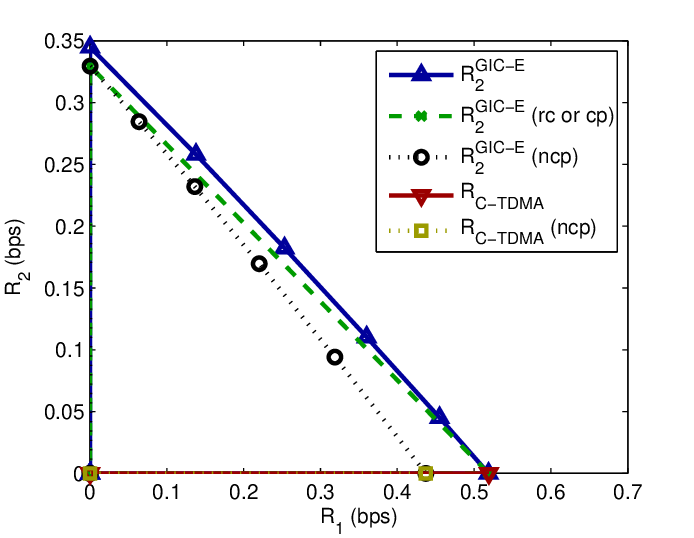}
    \caption{Numerical results for GIC-E with $c_{12}=1.9$,
    $c_{21}=1$, $c_{1e}=0.5$, $c_{2e}=1.6$, $P_{1}=P_{2}=10$.
    The schemes represented by $R_{\textrm{C-TDMA}}$ and
    $R_{\textrm{C-TDMA}}\textrm{(ncp)}$ does not achieve positive
    rates for user $2$.
    }\label{fig:6}
\end{figure}

The first
scenario depicted in Fig.~\ref{fig:4} shows the gain offered
by the cooperative encoding technique, as compared with the various
cooperative TDMA approaches. Also, it is shown that cooperative
channel prefixing does not increase the secrecy rate region in
this particular scenario. In Fig.~\ref{fig:5}, we consider a
channel with a rather capable eavesdropper.
In this case, it is straightforward to verify that the
corresponding single user channels have zero
secrecy capacities. However, with the appropriate
cooperation strategies between the two {\bf interfering
users}, the two users can achieve non-zero rates (as
reported in the figure). In Fig.~\ref{fig:6}, we consider an asymmetric
scenario, in which the first user has a weak channel to the eavesdropper,
but the second user has a strong channel to the eavesdropper.
In this case, the proposed cooperative encoding technique allows the
second user to achieve a positive secure transmission rate, which is not
possible by exploiting only the channel prefixing and time-sharing techniques.
In addition, by prefixing the channel,
the second user can help the first one to increase
its secure transmission rate. Finally,
we note that for some channel coefficients
$\Rc_{\textrm{C-TDMA}}$ outperforms $\Rc_2^{\textrm{GIC-E}}$ and
for some others $\Rc_2^{\textrm{GIC-E}}$ outperforms $\Rc_{\textrm{C-TDMA}}$.
Therefore, in general, the proposed techniques (cooperative randomized codebook
design, cooperative channel prefixing, and time-sharing) should be
exploited simultaneously as considered in Corollary~\ref{thm:RG}.

%%%%%%%%%%%%%%%%%%%%%%%%%%%%%%%%%%%%%%%%%%%%%%%%%%%%%%%%%%%%%%%%%%%%%%%%%%%%%%

\subsection{Special Cases}

\subsubsection{The Multiple Access Channel}
First, we define a set of probability distributions
$$
\Pc_{G3} \triangleq  \bigg\{p \: | \: p\in\Pc_G,
P_1^s(q)=P_1^o(q)=P_2^s(q)=P_2^o(q)=0
\textrm{ for any } Q=q \bigg\}.$$

Using this notation, one can easily see that the region
$\Rc^{\textrm{GIC-E}}$ in Corollary~\ref{thm:RG}
reduces to the following achievable secrecy rate
region for the Gaussian Multiple
Access Channel with an Eavesdropper (GMAC-E).
$$\Rc^{\textrm{GMAC-E}}\triangleq
\textrm{ the closure of }
\left\{\bigcup\limits_{p\in\Pc_{G3}} \Rc(p)\right\},$$
where the expressions in the region $\Rc(p)$
are calculated for the channel given by $p(y_1,y_e|x_1,x_2)\delta(y_2-y_1)$.

The region $\Rc^{\textrm{GMAC-E}}$ generalizes the
one obtained in~\cite{Tekin:The08} for the two user case.
The underlying reason is that, in the
achievable scheme of~\cite{Tekin:The08}, the users
are either transmitting their codewords or jamming
the channel whereas, in our approach, the users can transmit their
codewords and jam the channel simultaneously. In addition,
our cooperative TDMA approach generalizes the one proposed
in~\cite{Tekin:The08}, as we allow the two users to cooperate
in the design of randomized codebooks and channel prefixing
during the time slots dedicated to either one.

\subsubsection{The Relay-Eavesdropper Channel}
\label{sec:GaussianRelay-EavesdropperChannel}

In the previous section, we argued that the noise forwarding (NF)
scheme of~\cite{Lai:The08} can be obtained as a special case
of our generalized cooperation scheme. Here, we demonstrate the positive
impact of channel prefixing on increasing the achievable
secrecy rate of  the Gaussian relay-eavesdropper channel.
In particular, the proposed region for the Gaussian IC-E, when
specialized to the Gaussian relay-eavesdropper setting, results in
$$\Rc^{\textrm{GRE}}\triangleq
\textrm{ closure of the convex hull of }
\left\{\bigcup\limits_{p\in\Pc_{G4}} \Rc(p)\right\},$$
where
$$
\Pc_{G4}  \triangleq \bigg\{p \: | \: p\in\Pc_G,
|\Qc|=1, P_1^c(q)=P_1^o(q)=P_2^c(q)=
P_2^s(q)=0\bigg\}.$$

One the other hand, noise forwarding with
no channel prefixing (GNF-ncp) results in the following achievable rate.
\begin{eqnarray}\label{eq:RGNF}
R^{\textrm{[GNF-ncp]}}&=&\bigg[
\frac{1}{2}\log\left(1 + P_1 \right)
-
\frac{1}{2}\log\left(1 + c_{1e}P_1 \right)\nonumber\\
-&&
\min\left\{
    \frac{1}{2}\log\left(1 + \frac{c_{21}P_2}{1+P_1} \right),
    \frac{1}{2}\log\left(1 + \frac{c_{2e}P_2}{1+c_{1e}P_1} \right)
\right\}\nonumber\\
+&&
\min\left\{
    \frac{1}{2}\log\left(1 + \frac{c_{21}P_2}{1+P_1} \right),
    \frac{1}{2}\log\left(1 + c_{2e}P_2 \right)
\right\}
\bigg]^+,
\end{eqnarray}
where we choose $X_1=S_1\sim\Nc(0,P_1)$ and
$X_2=O_2\sim\Nc(0,P_2)$ in the expression of $R^{\textrm{[NF]}}$
(see also~\cite{Lai:The08}).

Numerically, the positive impact of channel prefixing is
illustrated in the following example. First, it is easy
to see that the following secrecy rate is achievable with channel prefixing
\begin{equation}
R_{1}=\left[ \frac{1}{2}\log\left(1+\frac{P_1}{1+c_{21}P_2}\right) -
\frac{1}{2}\log\left(1+\frac{c_{1e}P_1}{1+c_{2e}P_2}\right) \right]^+,\nonumber
\end{equation}
since $(R_{1},0)\in\Rc^{\textrm{GRE}}$ (i.e., we set
$P_1^s=P_1$ and $P_2^j=P_2$). Now, we let $c_{1e}=c_{2e}=1$ and
$P_1=P_2=1$, resulting in
$R^{\textrm{[GNF-ncp]}}=0$ and $R_{1}>0$ if $c_{21}<1$.

%%%%%%%%%%%%%%%%%%%%%%%%%%%%%%%%%%%%%%%%%%%%%%%%%%%%%%%%%%%%%%%%%%%%%%%%%%%%%%
%%%%%%%%%%%%%%%%%%%%%%%%%%%%%%%%%%%%%%%%%%%%%%%%%%%%%%%%%%%%%%%%%%%%%%%%%%%%%%

\section{Conclusions}
\label{sec:IV}

This work considered the two-user interference channel with an
(external) eavesdropper. An inner bound on the achievable secrecy
rate region was derived using a scheme that combines cooperative
randomized codebook design, channel prefixing, message
splitting, and time-sharing techniques.
More specifically, our achievable scheme allows the two users
to cooperatively construct their randomized codebooks and channel
prefixing distributions. Outer bounds are then derived and
used to establish the optimality of the proposed scheme in some
special cases. For the Gaussian scenario, channel prefixing
was used to allow the users to transmit independently generated
noise samples using a fraction of the available power.
Moreover, as a special case of time sharing, we have developed a novel
cooperative TDMA scheme, where a user can add {\em structured and
unstructured} noise to the channel during the allocated slot for the
other user. It is shown that this scheme reduces to the noise forwarding
scheme proposed earlier for the relay-eavesdropper
channel. In the Gaussian multiple-access setting, our
cooperative encoding and channel prefixing scheme was shown to enlarge
the achievable regions obtained in previous works.
The most interesting aspect of our results is,
perhaps, the illumination of the role of interference in
cooperatively adding randomness to increase
the achievable secrecy rates in multi-user networks.

%%%%%%%%%%%%%%%%%%%%%%%%%%%%%%%%%%%%%%%%%%%%%%%%%%%%%%%%%%%%%%%%%%%%%%%%%%%%%%
%%%%%%%%%%%%%%%%%%%%%%%%%%%%%%%%%%%%%%%%%%%%%%%%%%%%%%%%%%%%%%%%%%%%%%%%%%%%%%

\appendices

\section{Proof of Theorem~\ref{thm:R}}
\label{sec:ProofInnerBound}

\textbf{Probability of Error Analysis:}

Below we show that the decoding error probability of user $k$ averaged over
the ensemble can be arbitrarily made small for sufficiently large $n$.
This demonstrates the existence of a codebook with the property that
$\max(P_{e,1},P_{e,2})\leq \epsilon$, for any given $\epsilon>0$.
The analysis follows from similar arguments given in~\cite{Han:A81}.
See also~\cite{ThomasAndCover} for joint typical decoding error computations.
Here, for any given $\epsilon>0$, each receiver can decode
corresponding messages given above with an error probability less than
$\epsilon$ as $n\to\infty$, if the rates satisfy the following equations.
\begin{eqnarray}\label{eq:ProofR1}
R_{\Sc} + R_{\Sc}^x \leq I(\Sc;Y_1|\Sc^c,Q),\:\:
\forall \Sc \subset \{C_1,S_1,C_2,O_2\},
\end{eqnarray}
\begin{eqnarray}\label{eq:ProofR2}
R_{\Sc} + R_{\Sc}^x \leq I(\Sc;Y_2|\Sc^c,Q),\:\:
\forall \Sc \subset \{C_1,O_1,C_2,S_2\}.
\end{eqnarray}

\textbf{Equivocation Computation:}

We first write the following.

\begin{eqnarray}\label{eq:ProofEquivocation}
H(W_1,W_2|\Ym_e)&=& H(W_{C_1},W_{S_1},W_{C_2},W_{S_2}|\Ym_e)\nonumber\\
&\geq& H(W_{C_1},W_{S_1},W_{C_2},W_{S_2}|\Ym_e,\Qm) \nonumber\\
&=& H(W_{C_1},W_{S_1},W_{C_2},W_{S_2},\Ym_e|\Qm) - H(\Ym_e|\Qm)\nonumber\\
&=& H(\Cm_1,\Sm_1,\Om_1,\Cm_2,\Sm_2,\Om_2|\Qm) - H(\Ym_e|\Qm)\nonumber\\
&&{+}\: H(\Wc,\Ym_e|\Cm_1,\Sm_1,\Om_1,\Cm_2,\Sm_2,\Om_2,\Qm)\nonumber\\
&&{-}\:
H(\Cm_1,\Sm_1,\Om_1,\Cm_2,\Sm_2,\Om_2|\Wc,\Ym_e,\Qm)
\nonumber\\
&\geq& H(\Cm_1,\Sm_1,\Om_1,\Cm_2,\Sm_2,\Om_2|\Qm)- H(\Ym_e|\Qm)\nonumber\\
&&{+}\: H(\Ym_e|\Cm_1,\Sm_1,\Om_1,\Cm_2,\Sm_2,\Om_2,\Qm)\nonumber\\
&&{-}\:
H(\Cm_1,\Sm_1,\Om_1,\Cm_2,\Sm_2,\Om_2|\Wc,\Ym_e,\Qm)
\nonumber\\
&=& H(\Cm_1,\Sm_1,\Om_1,\Cm_2,\Sm_2,\Om_2|\Qm)\nonumber\\
&&{-}\:
H(\Cm_1,\Sm_1,\Om_1,\Cm_2,\Sm_2,\Om_2|\Wc,\Ym_e,\Qm)
\nonumber\\
&&{-}\:
I(\Cm_1,\Sm_1,\Om_1,\Cm_2,\Sm_2,\Om_2;\Ym_e|\Qm),
\end{eqnarray}
where we use the set notation
$\Wc \triangleq (W_{C_1},W_{S_1},W_{C_2},W_{S_2})$
to ease the presentation, and the inequalities are due
to the fact that conditioning does not increase entropy,

Here,

$H(\Cm_1,\Sm_1,\Om_1,\Cm_2,\Sm_2,\Om_2|\Qm)$
\begin{eqnarray}\label{eq:ProofEq1}
=n(R_{C_1}+R_{C_1}^x+R_{S_1}
+R_{S_1}^x+R_{O_1}^x+R_{C_2}+R_{C_2}^x+R_{S_2}+R_{S_2}^x+R_{O_2}^x-6\epsilon_1),
\end{eqnarray}
as, given $\Qm=\qv$, the tuple $(\Cm_1,\Sm_1,\Om_1,\Cm_2,\Sm_2,\Om_2)$
has $2^{n(R_{C_1}+R_{C_1}^x+R_{S_1}+R_{S_1}^x+R_{O_1}^x
+R_{C_2}+R_{C_2}^x+R_{S_2}+R_{S_2}^x+R_{O_2}^x-6\epsilon_1)}$
possible values each with equal probability.

Secondly,
\begin{eqnarray}\label{eq:ProofEq2}
I(\Cm_1,\Sm_1,\Om_1,\Cm_2,\Sm_2,\Om_2;\Ym_e|\Qm)
\leq n I(C_1,S_1,O_1,C_2,S_2,O_2;Y_e|Q)+n\epsilon_2,
\end{eqnarray}
where $\epsilon_2\to 0$ as $n\to\infty$.
See, for example, Lemma~$8$ of~\cite{Wyner:The75}.

Lastly, for any $W_{C_1}=w_{C_1}$, $W_{S_1}=w_{S_1}$,
$W_{C_2}=w_{C_2}$, $W_{S_2}=w_{S_2}$, and $\Qm=\qv$, we have
$$
H(\Cm_1,\Sm_1,\Om_1,\Cm_2,\Sm_2,\Om_2|
W_{C_1}=w_{C_1},W_{S_1}=w_{S_1},W_{C_2}=w_{C_2},W_{S_2}=w_{S_2},\Ym_e,\Qm=\qv)
\leq n\epsilon_3,
$$
for some $\epsilon_3\to 0$ as $n\to\infty$.
This is due to the Fano's inequality together with the randomized codebook
construction: Given all the message (bin) indices of two users, eavesdropper can
decode the randomization indices among those bins. Due to joint typicality,
this latter argument holds as long as the rates satisfy the following equations.
\begin{eqnarray}\label{eq:ProofR3}
R_{\Sc}^x  \leq  I(\Sc;Y_e|\Sc^c,Q),\:
\forall \Sc \subset \{C_1,S_1,O_1,C_2,S_2,O_2\}.
\end{eqnarray}
This follows as given bin indices $W_{C_1}$, $W_{S_1}$, $W_{C_2}$,
and $W_{S_2}$, this reduces to MAC probability
of error computation among the codewords of those bins.
See~\cite{ThomasAndCover} for details of computing error probabilities in MAC.
Then, averaging over $W_{C_1}$, $W_{S_1}$,
$W_{C_2}$, $W_{S_2}$, and $\Qm$, we obtain
\begin{eqnarray}\label{eq:ProofEq3}
H(\Cm_1,\Sm_1,\Om_1,\Cm_2,\Sm_2,\Om_2|
W_{C_1},W_{S_1},W_{C_2},W_{S_2},\Ym_e,\Qm)
\leq n\epsilon_3.
\end{eqnarray}

Hence, using (\ref{eq:ProofEq1}), (\ref{eq:ProofEq2}), and
(\ref{eq:ProofEq3}) in (\ref{eq:ProofEquivocation}) we obtain

\begin{equation}
R_1+R_2-\frac{1}{n}H(W_1,W_2|Y_e^n)
\leq  6\epsilon_1+\epsilon_2+\epsilon_3
\triangleq \epsilon
\to 0,
\end{equation}
as $n\to\infty$, where we set
\begin{equation}\label{eq:ProofR4}
R_{\Sc}^x  =  I(\Sc;Y_e|Q), \: \Sc = \{C_1,S_1,O_1,C_2,S_2,O_2\}.
\end{equation}
Combining (\ref{eq:ProofR1}), (\ref{eq:ProofR2}),
(\ref{eq:ProofR3}), and (\ref{eq:ProofR4}) we obtain the result,
i.e., $\Rc(p)$ is achievable for any $p\in\Pc$.

%%%%%%%%%%%%%%%%%%%%%%%%%%%%%%%%%%%%%%%%%%%%%%%%%%%%%%%%%%%%%%%%%%%%%%%%%%%%%%
%%%%%%%%%%%%%%%%%%%%%%%%%%%%%%%%%%%%%%%%%%%%%%%%%%%%%%%%%%%%%%%%%%%%%%%%%%%%%%

\section{Proof of Theorem~\ref{thm:RO1}}
\label{sec:ProofOuterBound1}

We bound $R_1$ below. The bound on $R_2$ can be obtained by following
similar steps below and reversing the indices $1$ and $2$.
We first state the following definitions.
For any random variable $Y$,
$\tilde{\Ym}^{i+1}\triangleq [Y(i+1) \cdots Y(n)]$, and
\begin{eqnarray}
I_1 & \triangleq & \sum\limits_{i=1}^n
I(\tilde{\Ym}_e^{i+1};Y_1(i)|\Ym_1^{i-1})\\
\hat{I}_1 & \triangleq & \sum\limits_{i=1}^n
I(\Ym_1^{i-1};Y_e(i)|\tilde{\Ym}_e^{i+1})\\
I_2 & \triangleq & \sum\limits_{i=1}^n
I(\tilde{\Ym}_e^{i+1};Y_1(i)|\Ym_1^{i-1},W_1)\\
\hat{I}_2 & \triangleq & \sum\limits_{i=1}^n
I(\Ym_1^{i-1};Y_e(i)|\tilde{\Ym}_e^{i+1},W_1)
\end{eqnarray}

Then, we consider the following bound.
\begin{eqnarray}
R_1-\epsilon & \leq & \frac{1}{n}H(W_1|\Ym_e) \nonumber\\
& = & \frac{1}{n}\left(H(W_1) - I(W_1;\Ym_e)\right) \nonumber\\
& = & \frac{1}{n}\left(H(W_1|\Ym_1) + I(W_1;\Ym_1) - I(W_1;\Ym_e)\right) \nonumber\\
& \leq & \epsilon_1
+ \frac{1}{n}\Bigg(\sum\limits_{i=1}^n I(W_1;Y_1(i)|\Ym_1^{i-1}) 
{-}\: \sum\limits_{i=1}^n I(W_1;Y_e(i)|\tilde{\Ym}_e^{i+1}) \Bigg)\nonumber\\
& = &  \epsilon_1
+ \frac{1}{n}\bigg(\sum\limits_{i=1}^n I(W_1;Y_1(i)|\Ym_1^{i-1}, \tilde{\Ym}_e^{i+1})
+ I_1 - I_2 
{-} \: \sum\limits_{i=1}^n I(W_1;Y_e(i)|\Ym_1^{i-1}, \tilde{\Ym}_e^{i+1})
- \hat{I}_1 + \hat{I}_2 \bigg) \nonumber\\
& = &  \epsilon_1
+ \frac{1}{n}\Bigg(\sum\limits_{i=1}^n I(W_1;Y_1(i)|\Ym_1^{i-1}, \tilde{\Ym}_e^{i+1}) {-}\:
\sum\limits_{i=1}^n I(W_1;Y_e(i)|\Ym_1^{i-1}, \tilde{\Ym}_e^{i+1}) \Bigg),\nonumber
\end{eqnarray}
where the first inequality is due to Lemma~\ref{thm:secrecylemma}
given at the end of this section, the second inequality is due to the Fano's
inequality at the receiver $1$ with some $\epsilon_1\to 0$ as $n\to \infty$,
the last equality is due to observations $I_1=\hat{I}_1$ and $I_2=\hat{I}_2$
(see~\cite[Lemma $7$]{Csiszar:Broadcast78}).

We define $U(i)\triangleq(\Ym_1^{i-1},\tilde{\Ym}_e^{i+1},i)$,
$V_1(i)\triangleq(U(i),W_1)$, and $V_2(i)\triangleq(U(i),W_2)$.
Using standard techniques (see, e.g.,~\cite{ThomasAndCover}),
we introduce a random variable $J$, which is uniformly distributed
over $\{1,\cdots,n\}$, and continue as below.
\begin{eqnarray}
R_1 - \epsilon & \leq &  \epsilon_1
+ \frac{1}{n}\Bigg(\sum\limits_{i=1}^n I(W_1;Y_1(i)|\Ym_1^{i-1}, \tilde{\Ym}_e^{i+1}) {-}\:
\sum\limits_{i=1}^n I(W_1;Y_e(i)|\Ym_1^{i-1}, \tilde{\Ym}_e^{i+1}) \Bigg) \nonumber\\
& = & \epsilon_1
+ \frac{1}{n}\Bigg( \sum\limits_{i=1}^n I(V_1(i);Y_1(i)|U(i))
{-}\:
\sum\limits_{i=1}^n I(V_1(i);Y_e(i)|U(i)) \Bigg) \nonumber\\
& = & \epsilon_1
+ \sum\limits_{j=1}^n I(V_1(j);Y_1(j)|U(j))p(J=j)
{-}\:
\sum\limits_{j=1}^n I(V_1(j);Y_e(j)|U(j))p(J=j)\nonumber
\end{eqnarray}
\begin{eqnarray}
& \leq & \epsilon_1
+ \sum\limits_{j=1}^n I(V_1(j);Y_1(j)|V_2(j),U(j))p(J=j)
{-}\:
\sum\limits_{j=1}^n I(V_1(j);Y_e(j)|U(j))p(J=j) \nonumber\\
& = & \epsilon_1
+  I(V_1;Y_1|V_2,U)-I(V_1;Y_e|U),
\end{eqnarray}
where the last inequality follows from the fact that
$V_1(j)\to U(j)\to V_2(j)$, which implies
$$I(V_1(j);Y_1(j)|U(j)) \leq I(V_1(j);Y_1(j)|V_2(j),U(j)),$$
after using the fact that conditioning does not increase entropy,
and the last equality follows by using a standard information
theoretic argument in which we define random variables for
the single-letter expression, e.g., $V_1$ has the same
distribution as $V_1(J)$.
Hence, we obtain the bound,
\begin{eqnarray}
R_1 \leq \left[ I(V_1;Y_1|V_2,U)-I(V_1;Y_e|U) \right]^+,
\end{eqnarray}
for some auxiliary random variables that factors as
$p(u)p(v_1|u)p(v_2|u)p(x_1|v_1)p(x_2|v_2)p(y_1,y_2,y_e|x_1,x_2)$.

\begin{lemma}\label{thm:secrecylemma}
The secrecy constraint
$$R_1+R_2-\frac{1}{n} H(W_1,W_2|\Ym_e) \leq \epsilon$$
implies that
$$R_1-\frac{1}{n} H(W_1|\Ym_e) \leq \epsilon,$$
and
$$R_2-\frac{1}{n} H(W_2|\Ym_e) \leq \epsilon.$$
\end{lemma}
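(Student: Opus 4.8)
The plan is to recast all three quantities as mutual informations between the messages and the eavesdropper's observation $\Ym_e$, after which the statement becomes a one-line consequence of the chain rule. First I would recall the standard modeling assumptions: the messages $W_1$ and $W_2$ are drawn uniformly from $\Wc_1$ and $\Wc_2$ and, since they originate at two distinct transmitters, are statistically independent. Hence $H(W_1)=\log M_1 = nR_1$, $H(W_2)=\log M_2 = nR_2$, and by independence $H(W_1,W_2)=H(W_1)+H(W_2)=n(R_1+R_2)$. With this, the hypothesis rewrites as $\frac{1}{n}\left(H(W_1,W_2)-H(W_1,W_2|\Ym_e)\right)=\frac{1}{n}I(W_1,W_2;\Ym_e)\leq\epsilon$, and the two desired conclusions rewrite as $\frac{1}{n}I(W_1;\Ym_e)\leq\epsilon$ and $\frac{1}{n}I(W_2;\Ym_e)\leq\epsilon$.

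The key step is then the chain rule for mutual information, which gives $I(W_1,W_2;\Ym_e)=I(W_1;\Ym_e)+I(W_2;\Ym_e|W_1)$. Since conditional mutual information is non-negative, $I(W_2;\Ym_e|W_1)\geq 0$, so $I(W_1;\Ym_e)\leq I(W_1,W_2;\Ym_e)$. Dividing by $n$ and invoking the hypothesis yields $\frac{1}{n}I(W_1;\Ym_e)\leq\epsilon$, i.e. $R_1-\frac{1}{n}H(W_1|\Ym_e)\leq\epsilon$. The bound on $R_2$ follows by the symmetric expansion $I(W_1,W_2;\Ym_e)=I(W_2;\Ym_e)+I(W_1;\Ym_e|W_2)$ and the same non-negativity argument.

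I do not expect a genuine obstacle here: the information-theoretic manipulation is elementary. The only point requiring care is the justification that $H(W_1,W_2)=n(R_1+R_2)$, which rests entirely on the independence and uniformity of the two messages; once that is stated, the monotonicity of mutual information under dropping arguments (equivalently, the non-negativity of the conditional term in the chain rule) closes the argument immediately. This lemma simply restates, in equivocation form, the observation already noted after the achievability definition that $\frac{1}{n}I(W_1,W_2;\Ym_e)\leq\epsilon$ implies $\frac{1}{n}I(W_k;\Ym_e)\leq\epsilon$ for $k=1,2$.
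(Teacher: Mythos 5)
Your proof is correct and follows essentially the same route as the paper: both come down to a chain-rule decomposition plus dropping a non-negative term, since your $I(W_2;\Ym_e|W_1)\geq 0$ is, after using independence, exactly the paper's ``conditioning does not increase entropy'' step $H(W_2|\Ym_e,W_1)\leq H(W_2)=nR_2$. The only (minor) difference is that the paper works directly with conditional entropies and thus needs only the uniformity of $W_2$, whereas your mutual-information reformulation additionally invokes the independence of $W_1$ and $W_2$ to write $H(W_1,W_2)=n(R_1+R_2)$ --- a standard assumption in this model, so no gap results.
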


\begin{proof}
\begin{eqnarray}
\frac{1}{n}H(W_1|\Ym_e) & = & \frac{1}{n}
H(W_1,W_2|\Ym_e) - \frac{1}{n} H(W_2|\Ym_e,W_1) \nonumber\\
& \geq & R_1 - \epsilon + R_2 - \frac{1}{n} H(W_2|\Ym_e,W_1) \nonumber\\
& = & R_1 - \epsilon + \frac{1}{n} H(W_2) - \frac{1}{n} H(W_2|\Ym_e,W_1) \nonumber\\
& \geq & R_1 - \epsilon,
\end{eqnarray}
where the second equality follows by $H(W_2)=nR_2$, and the last
inequality follows due to the fact that conditioning does not
increase entropy. Second statement follows from a similar observation.
\end{proof}

%%%%%%%%%%%%%%%%%%%%%%%%%%%%%%%%%%%%%%%%%%%%%%%%%%%%%%%%%%%%%%%%%%%%%%%%%%%%%%
%%%%%%%%%%%%%%%%%%%%%%%%%%%%%%%%%%%%%%%%%%%%%%%%%%%%%%%%%%%%%%%%%%%%%%%%%%%%%%

\section{Proof of Theorem~\ref{thm:RO2}}
\label{sec:ProofOuterBound2}

From arguments given in~\cite[Lemma]{Costa:The87},
the assumed property of the channel implies the following.
\begin{equation}
I(\Vm_2;\Ym_2|\Vm_1) \leq I(\Vm_2;\Ym_1|\Vm_1)
\end{equation}

Then, by considering $V_1(i)=W_1$ and $V_2(i)=W_2$, for
$i=1,\cdots,n$, we get
\begin{equation}\label{eq:A.3.2}
I(W_2;\Ym_2|W_1) \leq I(W_2;\Ym_1|W_1)
\end{equation}

We continue as follows.
\begin{eqnarray}\label{eq:A.3.3}
\frac{1}{n}H(W_1,W_2|\Ym_1) & = & \frac{1}{n}H(W_1|\Ym_1)
+ \frac{1}{n}H(W_2|\Ym_1,W_1)\nonumber\\
& \leq & \epsilon_1 + \frac{1}{n}H(W_2|\Ym_1,W_1)\nonumber\\
& \leq & \epsilon_1 + \frac{1}{n}H(W_2|\Ym_2,W_1)\nonumber\\
& \leq & \epsilon_1 + \frac{1}{n}H(W_2|\Ym_2)\nonumber\\
& \leq & \epsilon_1 + \epsilon_2,
\end{eqnarray}
where the first inequality is due to the Fano's inequality
at the receiver $1$ with some
$\epsilon_1\to 0$ as $n\to\infty$,
the second inequality follows from \eqref{eq:A.3.2},
the third one is due to conditioning can not increase
entropy, and the last one follows from the
Fano's inequality at the receiver $2$ with some
$\epsilon_2\to 0$ as $n\to\infty$.

We then proceed following the standard techniques given
in~\cite{Wyner:The75,Csiszar:Broadcast78}. We first
state the following definitions.
\begin{eqnarray}
I_1 & \triangleq & \sum\limits_{i=1}^n
I(\tilde{\Ym}_e^{i+1};Y_1(i)|\Ym_1^{i-1})\\
\hat{I}_1 & \triangleq & \sum\limits_{i=1}^n
I(\Ym_1^{i-1};Y_e(i)|\tilde{\Ym}_e^{i+1})\\
I_3 & \triangleq & \sum\limits_{i=1}^n
I(\tilde{\Ym}_e^{i+1};Y_1(i)|\Ym_1^{i-1},W_1,W_2)\\
\hat{I}_3 & \triangleq & \sum\limits_{i=1}^n
I(\Ym_1^{i-1};Y_e(i)|\tilde{\Ym}_e^{i+1},W_1,W_2),
\end{eqnarray}
where $\tilde{\Ym}^{i+1} = [Y(i+1) \cdots Y(n)]$ for
random variable $Y$.

Then, we bound the sum rate as follows.
\begin{eqnarray}
R_1+R_2-\epsilon & \leq & \frac{1}{n}H(W_1,W_2|\Ym_e) \nonumber\\
& = & \frac{1}{n}\Bigg(H(W_1,W_2|\Ym_1) + I(W_1,W_2;\Ym_1) 
{-}\:
I(W_1,W_2;\Ym_e)\Bigg) \nonumber\\
& \leq & \epsilon_1+\epsilon_2
+ \frac{1}{n}\Bigg(\sum\limits_{i=1}^n I(W_1,W_2;Y_1(i)|\Ym_1^{i-1})
{-}\:
\sum\limits_{i=1}^n I(W_1,W_2;Y_e(i)|\tilde{\Ym}_e^{i+1}) \Bigg)\nonumber\\
& = &  \epsilon_1+\epsilon_2
+ \frac{1}{n}\bigg(
I_1 - I_3 - \hat{I}_1 + \hat{I}_3 
{+}\:
\sum\limits_{i=1}^n I(W_1,W_2;Y_1(i)|\Ym_1^{i-1}, \tilde{\Ym}_e^{i+1}) \nonumber\\
&&\quad
{-} \: \sum\limits_{i=1}^n I(W_1,W_2;Y_e(i)|\Ym_1^{i-1}, \tilde{\Ym}_e^{i+1})
\bigg) \nonumber\\
& = &  \epsilon_1+\epsilon_2
+ \frac{1}{n}\Bigg(\sum\limits_{i=1}^n I(W_1,W_2;Y_1(i)|\Ym_1^{i-1}, \tilde{\Ym}_e^{i+1}) {-}\:
\sum\limits_{i=1}^n I(W_1,W_2;Y_e(i)|\Ym_1^{i-1}, \tilde{\Ym}_e^{i+1}) \Bigg), \nonumber
\end{eqnarray}
where the first inequality is due to the secrecy requirement,
the last inequality follows by \eqref{eq:A.3.3}, and
the last equality follows by the fact that $I_1=\hat{I}_1$ and $I_3=\hat{I}_3$,
which can be shown using arguments similar to~\cite[Lemma $7$]{Csiszar:Broadcast78}.

We define $U(i)\triangleq(\Ym_1^{i-1},\tilde{\Ym}_e^{i+1},i)$,
$V_1(i)\triangleq(U(i),W_1)$, and $V_2(i)\triangleq(U(i),W_2)$.
Using standard techniques (see, e.g.,~\cite{ThomasAndCover}),
we introduce a random variable $J$, which is uniformly distributed
over $\{1,\cdots,n\}$, and continue as below.

$\frac{1}{n}\Bigg(\sum\limits_{i=1}^n I(W_1,W_2;Y_1(i)|\Ym_1^{i-1}, \tilde{\Ym}_e^{i+1})
{-} \sum\limits_{i=1}^n I(W_1,W_2;Y_e(i)|\Ym_1^{i-1}, \tilde{\Ym}_e^{i+1}) \Bigg)$
\begin{eqnarray}
&=& \frac{1}{n}\Bigg( \sum\limits_{i=1}^n I(V_1(i),V_2(i);Y_1(i)|U(i))
{-}\: \sum\limits_{i=1}^n I(V_1(i),V_2(i);Y_e(i)|U(i)) \Bigg) \nonumber\\
&=&  \sum\limits_{j=1}^n I(V_1(j),V_2(j);Y_1(j)|U(j))p(J=j)
{-}\:   \sum\limits_{j=1}^n I(V_1(j),V_2(j);Y_e(j)|U(j))p(J=j) \nonumber\\
&=&  I(V_1,V_2;Y_1|U)
-I(V_1,V_2;Y_e|U),
\end{eqnarray}
where, using a standard information theoretic argument, we have
defined the random variables for the single-letter expression,
e.g., $V_1$ has the same distribution as $V_1(J)$.
Now,
due to the memoryless property of the channel, we have
$(U,V_1,V_2) \to (X_1,X_2) \to (Y_1,Y_2,Y_e)$, which implies
$p(y_1,y_2,y_e|x_1,x_2,v_1,v_2,u)=p(y_1,y_2,y_e|x_1,x_2)$.
As we define $V_1(J)=(U(J),W_1)$ and $V_2(J)=(U(J),W_2)$,
we have $V_1 \to U \to V_2$, which implies
$p(v_1,v_2|u)=p(v_1|u)p(v_2|u)$. Finally, as
$X_1(J)$ is a stochastic function of $W_1$,
$X_2(J)$ is a stochastic function of $W_2$, and
$W_1$ and $W_2$ are independent, we have
$X_1(J) \to V_1(J) \to V_2(J)$,
$X_2(J) \to V_2(J) \to V_1(J)$, and
$X_1(J) \to (V_1(J),V_2(J)) \to X_2(J)$, which together implies
that
$p(x_1,x_2|v_1,v_2,u)=p(x_1,x_2|v_1,v_2)
=p(x_1|v_1,v_2)p(x_2|v_1,v_2)=p(x_1|v_1)p(x_2|v_2)$.

Using this in the above equation, we obtain a sum-rate bound,
\begin{eqnarray}
R_1+R_2 \leq \left[ I(V_1,V_2;Y_1|U)-I(V_1,V_2;Y_e|U) \right]^+,
\end{eqnarray}
for some auxiliary random variables that factors as
$p(u)p(v_1|u)p(v_2|u)p(x_1|v_1)p(x_2|v_2)p(y_1,y_2,y_e|x_1,x_2)$,
if (\ref{eq:ProofOutCond1}) holds.

%%%%%%%%%%%%%%%%%%%%%%%%%%%%%%%%%%%%%%%%%%%%%%%%%%%%%%%%%%%%%%%%%%%%%%%%%%%%%%
%%%%%%%%%%%%%%%%%%%%%%%%%%%%%%%%%%%%%%%%%%%%%%%%%%%%%%%%%%%%%%%%%%%%%%%%%%%%%%

\section{Proof of Corollary~\ref{thm:C1}}
\label{sec:ProofC1}

Achievability follows from Theorem~\ref{thm:R} by only utilizing
$S_1$ and $O_2$ together with the second equation
in \eqref{eq:Speceq1}, where we set $R_2=0$ and set $R_1$ as follows.
For a given $p\in \Pc(S_1,O_2)$, if
$I(S_1;Y_1|O_2,Q) \leq I(S_1;Y_e|Q)$, we set $R_1=0$;
otherwise we assign the following rates.
\begin{eqnarray}
R_{S_1} & = & I(S_1;Y_1|O_2,Q)-I(S_1;Y_e|Q) \nonumber\\
R_{S_1}^x & = & I(S_1;Y_e|Q) \nonumber\\
R_{O_2}^x & = & I(O_2;Y_e|S_1,Q), \nonumber
\end{eqnarray}
where $R_1=R_{S_1}$.

Converse follows from Theorem~\ref{thm:RO1}. That is, if
$(R_1,R_2)$ is achieavable, then
$R_1\leq \max\limits_{p\in\Pc_O} \: I(S_1;Y_1|O_2,Q)-I(S_1;Y_e|Q)$
and $R_2=0$, due to the first condition
given in \eqref{eq:Speceq1}.

%%%%%%%%%%%%%%%%%%%%%%%%%%%%%%%%%%%%%%%%%%%%%%%%%%%%%%%%%%%%%%%%%%%%%%%%%%%%%%
%%%%%%%%%%%%%%%%%%%%%%%%%%%%%%%%%%%%%%%%%%%%%%%%%%%%%%%%%%%%%%%%%%%%%%%%%%%%%%

\section{Proof of Corollary~\ref{thm:C2}}
\label{sec:ProofC2}

Achievability follows from Theorem~\ref{thm:R} by only utilizing
$S_1$ and $C_2$ together with the channel condition given in~\eqref{eq:Speceq2}.
For a given $p\in \Pc(S_1,C_2)$, if
$I(S_1,C_2;Y_1|Q) \leq I(S_1,C_2;Y_e|Q)$, we set $R_1=R_2=0$;
otherwise we assign the following rates.
\begin{eqnarray}
R_{S_1} & = & I(S_1;Y_1|C_2,Q)-I(S_1;Y_e|C_2,Q) \nonumber\\
R_{S_1}^x & = & I(S_1;Y_e|C_2,Q) \nonumber\\
R_{C_2} & = & I(C_2;Y_1|Q)-I(C_2;Y_e|Q) \nonumber\\
R_{C_2}^x & = & I(C_2;Y_e|Q), \nonumber
\end{eqnarray}
where $R_1=R_{S_1}$ and $R_2=R_{C_2}$.

Converse follows from Theorem~\ref{thm:RO2}
as the needed condition is satisfied by the channel.

%%%%%%%%%%%%%%%%%%%%%%%%%%%%%%%%%%%%%%%%%%%%%%%%%%%%%%%%%%%%%%%%%%%%%%%%%%%%%%
%%%%%%%%%%%%%%%%%%%%%%%%%%%%%%%%%%%%%%%%%%%%%%%%%%%%%%%%%%%%%%%%%%%%%%%%%%%%%%

\section{Proof of Corollary~\ref{thm:C3}}
\label{sec:ProofC3}

Achievability follows from Theorem~\ref{thm:R} by only utilizing
$S_1$ and $O_2$ together with the channel condition given in~\eqref{eq:Speceq3}.
For a given $p\in \Pc(S_1,O_2)$, if
$I(S_1,O_2;Y_1|Q) \leq I(S_1,O_2;Y_e|Q)$, we set $R_1=R_2=0$;
otherwise we assign the following rates.
\begin{eqnarray}
R_{S_1} & = & I(S_1,O_2;Y_1|Q)-I(S_1,O_2;Y_e|Q) \nonumber\\
R_{S_1}^x & = & I(S_1,O_2;Y_e|Q)-I(O_2;Y_1|S_1,Q) \nonumber\\
R_{O_2}^x & = & I(O_2;Y_1|S_1,Q), \nonumber
\end{eqnarray}
where $R_1=R_{S_1}$ and $R_2=0$.

Converse follows from Theorem~\ref{thm:RO2}
as the needed condition is satisfied by the channel.

%%%%%%%%%%%%%%%%%%%%%%%%%%%%%%%%%%%%%%%%%%%%%%%%%%%%%%%%%%%%%%%%%%%%%%%%%%%%%%
%%%%%%%%%%%%%%%%%%%%%%%%%%%%%%%%%%%%%%%%%%%%%%%%%%%%%%%%%%%%%%%%%%%%%%%%%%%%%%

\section{Proof of Corollary~\ref{thm:C4}}
\label{sec:ProofC4}

Achievability follows from Theorem~\ref{thm:R} by only utilizing
$S_1$ and $O_2$ together with the channel condition given in~\eqref{eq:Speceq4}.
For a given $p\in \Pc(S_1,O_2)$, if
$I(S_1,O_2;Y_1|Q) \leq I(S_1,O_2;Y_e|Q)$, we set $R_1=R_2=0$;
otherwise we assign the following rates.
\begin{eqnarray}
R_{S_1} & = & I(S_1,O_2;Y_1|Q)-I(S_1,O_2;Y_e|Q) \nonumber\\
R_{S_1}^x & = & I(S_1;Y_e|Q) \nonumber\\
R_{O_2}^x & = & I(O_2;Y_e|S_1,Q), \nonumber
\end{eqnarray}
where $R_1=R_{S_1}$ and $R_2=0$.

Converse follows from Theorem~\ref{thm:RO2}
as the needed condition is satisfied by the channel.

%%%%%%%%%%%%%%%%%%%%%%%%%%%%%%%%%%%%%%%%%%%%%%%%%%%%%%%%%%%%%%%%%%%%%%%%%%%%%%
%%%%%%%%%%%%%%%%%%%%%%%%%%%%%%%%%%%%%%%%%%%%%%%%%%%%%%%%%%%%%%%%%%%%%%%%%%%%%%

\section{Proof of Corollary~\ref{thm:C6}}
\label{sec:ProofC6}

For a given MAC-E with $p(y_1,y_e|x_1,x_2)$,
we consider an IC-E defined by $$p(y_1,y_2,y_e|x_1,x_2)=p(y_1,y_e|x_1,x_2)\delta(y_2-y_1)$$
and utilize Theorem~\ref{thm:R} with $p\in\Pc(C_1,C_2)$ satisfying \eqref{eq:Speceq7}.
Then, the achievable region becomes
\begin{eqnarray}
R_1=R_{C_1}&\leq&I(C_1;Y_1|C_2,Q)-R_{C_1}^x \nonumber\\
R_2=R_{C_2}&\leq&I(C_2;Y_1|C_1,Q)+R_{C_1}^x 
- I(C_1,C_2;Y_e|Q), \nonumber\\
R_1+R_2=R_{C_1}+R_{C_2} &\leq& I(C_1,C_2;Y_1|Q)
- I(C_1,C_2;Y_e|Q),
\end{eqnarray}
where $I(C_1;Y_e|Q)\leq R_{C_1}^x \leq I(C_1;Y_e|C_2,Q)$
and $R_{C_2}^x = I(C_1,C_2;Y_e|Q) - R_{C_1}^x$ . Hence,
$R_1+R_2=[I(C_1,C_2;Y_1|Q)- I(C_1,C_2;Y_e|Q)]^+$ is achievable for
any $p\in\Pc(C_1,C_2)$ satisfying \eqref{eq:Speceq7}.

The following outer bound on the sum rate follows by
Theorem~\ref{thm:RO2}, as the constructed IC-E satisfies
the needed condition of the theorem.
$$R_1+R_2\leq I(C_1,C_2;Y_1|Q)-I(C_1,C_2;Y_e|Q),$$
for any $p\in\Pc(C_1,C_2)$. Which is what needed to be shown.

%%%%%%%%%%%%%%%%%%%%%%%%%%%%%%%%%%%%%%%%%%%%%%%%%%%%%%%%%%%%%%%%%%%%%%%%%%%%%%
%%%%%%%%%%%%%%%%%%%%%%%%%%%%%%%%%%%%%%%%%%%%%%%%%%%%%%%%%%%%%%%%%%%%%%%%%%%%%%

\section{Proof of Corollary~\ref{thm:C7}}
\label{sec:ProofC7}

We first remark that
$R^{[\textrm{NF}]}$
will remain the same if we restrict the union over the set of
probability distributions
$$\tilde{\Pc}(S_1,O_2,|\Qc|=1)\triangleq
\{p \: | \: p\in\Pc(S_1,O_2,|\Qc|=1),
I(O_2;Y_e)\leq I(O_2;Y_1)\}.$$

As for any $p\in\Pc(S_1,O_2,|\Qc|=1)$ satisfying $I(O_2;Y_e) \geq I(O_2;Y_1)$,
$R_1(p)= [I(S_1;Y_1|O_2)-I(S_1;Y_e|O_2)]^+$ since
$I(O_2;Y_1) \leq I(O_2;Y_e) \leq I(O_2;Y_e|S_1)$ in this case.
And the highest rate achievable with the NF scheme occurs
if $O_2$ is chosen to be deterministic, and hence $I(O_2;Y_1) = I(O_2;Y_e)$
case will result in the highest rate among the probability distributions
$p\in\Pc(S_1,O_2,|\Qc|=1)$ satisfying $I(O_2;Y_e) \geq I(O_2;Y_1)$.
Therefore, without loss of generality, we can write
$$R^{\textrm{[NF]}}=
\max\limits_{p\in\tilde{\Pc}(S_1,O_2,|\Qc|=1)}
\: R_1(p),$$
where $R_1(p)=[I(S_1;Y_1|O_2)+\min\{I(O_2;Y_1),I(O_2;Y_e|S_1)\}
-I(S_1,O_2;Y_e)]^+$.

Now, fix some $p\in\tilde{\Pc}(S_1,O_2,|\Qc|=1)$, and set
$R_{O_2}^x=\min\{I(O_2;Y_1),I(O_2;Y_e|S_1)\}$,
$R_{S_1}^x=I(S_1,O_2;Y_e)-\min\{I(O_2;Y_1),I(O_2;Y_e|S_1)\}$, and
$R_{S_1}=I(S_1;Y_1|O_2)-I(S_1,O_2;Y_e)+\min\{I(O_2;Y_1),I(O_2;Y_e|S_1)\}$,
where we set $R_1=0$ if the latter is negative.
Here,
\begin{eqnarray}\label{eq:R_NF1}
R_1&=&\bigg[I(S_1;Y_1|O_2)+\min\{I(O_2;Y_1),I(O_2;Y_e|S_1)\} 
- I(S_1,O_2;Y_e)\bigg]^+ \nonumber
\end{eqnarray}
is achievable, i.e., $(R_1(p),0)\in\Rc^{\textrm{RE}}$
for any $p\in\tilde{\Pc}(S_1,O_2,|\Qc|=1)$.
Observing that $(R^{\textrm{[NF]}},0)\in\Rc^{\textrm{RE}}$,
we conclude that the noise forwarding (NF) scheme
of~\cite{Lai:The08} is a special case of the proposed scheme.

%%%%%%%%%%%%%%%%%%%%%%%%%%%%%%%%%%%%%%%%%%%%%%%%%%%%%%%%%%%%%%%%%%%%%%%%%%%%%%
%%%%%%%%%%%%%%%%%%%%%%%%%%%%%%%%%%%%%%%%%%%%%%%%%%%%%%%%%%%%%%%%%%%%%%%%%%%%%%

\section{Proof of Corollary~\ref{thm:C8}}
\label{sec:ProofC8}

For any given $p\in\Pc(S_1,O_2,|\Qc|=1)$ satisfying
$I(O_2;Y_e) \leq I(O_2;Y_1)$, we see that
$R_1=[I(S_1,O_2;Y_1)-I(S_1,O_2;Y_e)]^+$ is achievable
due to \eqref{eq:Speceq8} and Corollary~\ref{thm:C7}.
The converse follows by Theorem~\ref{thm:RO2}
as the needed condition is satisfied by considering an
IC-E defined as $p(y_1,y_e|x_1,x_2)\delta(y_2-y_1)$,
where we set $|\Qc|=1$ in the upper bound
as the time sharing random variable is not needed for
this scenario, and further limit the input distributions
to satisfy $I(O_2;Y_e) \leq I(O_2;Y_1)$. The latter does not
reduce the maximization for the upper bound due to
a similar reasoning given in the Proof of
Corollary~\ref{thm:C7}.

%%%%%%%%%%%%%%%%%%%%%%%%%%%%%%%%%%%%%%%%%%%%%%%%%%%%%%%%%%%%%%%%%%%%%%%%%%%%%%
%%%%%%%%%%%%%%%%%%%%%%%%%%%%%%%%%%%%%%%%%%%%%%%%%%%%%%%%%%%%%%%%%%%%%%%%%%%%%%

\bibliographystyle{IEEEtran}

\end{document}